\DeclareMathOperator{\negl}{negl}
\newtheorem{theorem}{Theorem}
\newtheorem{lemma}{Lemma}
\newtheorem{definition}{Definition}
\newtheorem*{Game*}{Game}
\newtheorem*{TokGen*}{TokenGeneration Phase}
\newtheorem*{TokVer*}{TokenVerification Phase}
\newtheorem*{Setup*}{Setup}
\newtheorem*{Query*}{Query}
\newtheorem*{Challenge*}{Challenge}
\newtheorem*{Guess*}{Guess}
\theoremstyle{remark}
\DeclareMathOperator{\sins}{sins}
\DeclareMathOperator{\sinc}{sinc}
\DeclareMathOperator{\E}{E}
\newcommand{\projector}[1]{\ketbra{#1}{#1}}
\newcommand{\identity}{\mathbbm{1}}
\begin{document}

\preprint{APS/123-QED}

\title{Existential Unforgeability in Quantum Authentication From Quantum Physical Unclonable Functions Based on Random von Neumann Measurement}

\author{Soham Ghosh$^{*}$, Vladlen Galetsky$^{*}$, Pol Julià Farré$^{\dagger}$, Christian Deppe$^{\dagger}$, Roberto Ferrara$^{*}$, Holger Boche$^{*}$}%
\affiliation{%
 $^{*}$Technical University of Munich \\
 $^{\dagger}$Technical University of Braunschweig
}%

\date{\today}

\begin{abstract}
Physical Unclonable Functions (PUFs) leverage inherent, non-clonable physical randomness to generate unique input-output pairs, serving as secure fingerprints for cryptographic protocols like authentication. Quantum PUFs (QPUFs) extend this concept by using quantum states as input-output pairs, offering advantages over classical PUFs, such as challenge reusability via public channels and eliminating the need for trusted parties due to the no-cloning theorem. Recent work introduced a generalized mathematical framework for QPUFs. It was shown that random unitary QPUFs cannot achieve existential unforgeability against Quantum Polynomial Time (QPT) adversaries. Security was possible only with additional uniform randomness. To avoid the cost of external randomness, we propose a novel measurement-based scheme. Here, the randomness naturally arises from quantum measurements. Additionally, we introduce a second model where the QPUF functions as a nonunitary quantum channel, which guarantees existential unforgeability. These are the first models in the literature to demonstrate a high level of provable security. Finally, we show that the Quantum Phase Estimation (QPE) protocol, applied to a Haar random unitary, serves as an approximate implementation of the second type of QPUF by approximating a von Neumann measurement on the unitary's eigenbasis. 
\end{abstract}

\maketitle

\section{Introduction}
\subsection{Physical Unclonable Function (PUF):}
   Physical Unclonable Functions (PUF) are hardware devices with the assumption of having randomness which is inherently both physical and unclonable. This results in each PUF having a unique behaviour that produces different outputs (responses) over the same set of inputs (challenges). The assumption is a theoretical abstraction of the idea that some unpredictable manufacturing variations can create devices for which replicating their behaviour is more easily done by simply querying (collecting challenges and responses) and learning about the device rather than trying to learn about the underlying manufacturing variations (sometimes destructively). This assumption is not only valid but also cost-effective, capitalizing on the inherent unpredictability in real-world manufacturing processes. In general, the manufacturing variations can be measured as a certain amount of randomness in bits or qubits that makes the amount of queries needed to learn about the PUF device exponential in the size of the input bit or qubit space, making the PUF device highly suitable for security schemes like authentication or encryption. The queries (often called \emph{Challenge-Response Pairs} or CRPs \cite{Arapinis2021quantumphysical}) can be stored as smaller unique fingerprints. 
    
    This assumption is not only valid but also cost-effective, capitalizing on the inherent unpredictability in real-world manufacturing processes. Manufacturing variations used to create PUFs encompass diverse examples, including gate delays within integrated circuits \cite{Gao2020,M.Roel} and imperfections within optical crystals \cite{Gao2020,Kim2022,M.Roel}. 

\subsubsection{Classical PUF (CPUF):}
  PUFs, recognized as an efficient security solution for establishing trust in communication systems \cite{N1_Holger}, have been extensively studied in classical contexts. Their role as a foundation for security in embedded systems is outlined in \cite{N4_Holger}, and a comprehensive overview of classical PUF constructions can be found in \cite{M.Roel}. Despite their widespread use, classical PUFs (CPUFs) face three significant limitations: 
    \begin{itemize}
        \item {Trusted Party:} Classical bits are clonable. Thus, the party which creates and/or stores the CRPs in any security scheme becomes a trusted party posing potential threats to privacy leakage.
        \item {Challenge reusability:} Again, as classical bits are clonable, no data point in the CRPs can be reused after communication via a public channel, as this would allow an eavesdropper to simply reuse the eavesdropped responses. 
        \item {Security:} Most of CPUFs constructions are heuristic, having no provable security. 
    \end{itemize}
As outlined in the following section, these limitations are effectively addressed through a quantum solution. Owing to these technological advantages, quantum technologies are central to advancing 6G communication systems, as discussed in \cite{N2_Holger,N3_Holger}.

\subsubsection{Quantum PUF (QPUF)}
    A Quantum PUF (QPUF) can simply be thought of as a device that produces  Quantum CRPs, having quantum states as data points.
    While the unclonability of the PUF protects the whole device in quantum and classical devices alike, in the CPUF past queries can be cloned.
    In contrast, the quantum no-cloning theorem ensures that even the single queries of the QPUF cannot be cloned.
    Moreover, performing quantum state tomography on a specific query would also require an exponential, in the size of the considered quantum system, number of identical queries.
    Thus the issue of \emph{trusted party} holding a challenge and \emph{challenge reusability} can be solved with QPUFs.
    
    A precursor of QPUFs~\cite{Skori2010} provided such security only against classical adversaries. Recent work~\cite{Arapinis2021quantumphysical} introduced a mathematical framework that defines QPUFs with provable security also against quantum adversaries.
    Since then, QPUFs have seen increased interest~\cite{DKDK21,PSATG21,GGDF22,DKKC22,CDMWAK23}. 
    
    In~\cite{Arapinis2021quantumphysical} it was proved that a QPUF can only be an approximation of a Haar (uniform) random unitary channel if and only if the domain and range of the QPUF channel are the same.
    They established three hierarchical security notions: \emph{Quantum Exponential Unforgeability, Quantum Existential Unforgeability} and \emph{Quantum Selective Unforgeability}, arranged in descending order of security strength.
    Subsequently, they explicitly proved  that their model inherently lacks both exponential and existential unforgeability.
    However, they proved that their model does exhibit selective unforgeability, a property that is often highly suitable for numerous practical applications.

\subsection{Our Contributions}
    Quantum exponential unforgeability is a security measure defined for any adversary possessing the power of querying and possessing exponentially many data points. However, this level of security can never be achieved \emph{realistically} by any classical or quantum PUF as, if an  exponential amount of queries were available, the adversary would be able to perform efficient state tomography on the PUF CRPs or process tomography on the PUF channel. 
    
    However, both Existential and Selective unforgeability are security measures against a Quantum Polynomial Time adversary, which is more plausible to be required. Since the authors in \cite{Arapinis2021quantumphysical} proved that no Haar random unitary channel can provide Quantum Existential Unforgeability, the only way to achieve that level of security would be to introduce non-unitarity in the QPUF construction.
    
    A simple way to construct such a non-unitary channel (quantum process) is to combine measurements with unitary evolutions.
    This approach is without loss of generality, as any quantum channel can be thought of as a measurement channel with a post measurement state, where some of the classical information from the measurement is lost.
    Thus, we modelled our QPUF using a general quantum instrument which performs a measurement on a Haar random basis \cite{M.Ozlos}. The QPUF performs a von Neumann measurement on an input quantum state but the measurement basis is rotated to some random basis state using a Haar random unitary \cite{M.Ozlos}. The measurement and the random rotation together constitute our QPUF and hence it is inherently a non-unitary quantum process.
    In short, we are proposing the idea that measurements in a QPUF can increase the security achievable in schemes using the aforementioned QPUF.
    Intuitively, the measurement helps because it collapses the quantum state of any adversary attempting to use the QPUF in a way that was not intended, while not affecting the quantum state of the intended users.

    We prove the  main result that our QPUF based on a Haar random von Neumann measurement possesses Quantum Existential Unforgeability. With this we introduce the first ever model for a QPUF that has such high level of provable security. 
    
    For the implementation of an ideal QPUF we propose two device models, each introducing its own limitation for a practical realization:
    \begin{itemize}
        \item A random measurement can be approximated by the phase estimation protocol \cite{Kitaev96} on a Haar unitary $U$.
        The problem with this model is that it assumes the device can perform $CU$ an exponential number of times with respect to the number of input qubits.
        \item A random measurement can be obtained by a standard basis measurement of the target state after it has been passed through a generalized controlled NOT gate with the control basis rotated by a Haar random unitary $U$ onto a random basis.
        This model assumes that the device can perform the inverse of a random unitary. Although there are known algorithms \cite{sardharwalla2016universal,Sedl_k_2019,Quintino_2019_Prob,Quintino_2019,Dong_2021,Quintino2022deterministic} for performing the inverse of a given unknown unitary, all of them have exponential gate cost complexity.
    \end{itemize}
    Such limitations would have to be addressed when designing practical realisations of QPUFs. We have discussed these limitations in details in \cref{Discussion} and proposed possible research directions in order to address them.

Another model proposition of ours follows by noticing that, in the case of Selective Unforgeability \cite{Arapinis2021quantumphysical}, the security comes from the fact that the verifier uses additional randomness apart from the QPUF to select their queries for the QPUF. This weakens the security notion, as achieving perfect uniform randomness is challenging in practice and incurs additional costs. Therefore, we propose to obtain this selection of query states via measurements which are natural sources of randomness according to the laws of quantum mechanics. Below, we describe a single round of the protocol. 
Namely, the verifier generates $M$ Bell states, $\ket{\Phi^{+}}$,
\begin{equation}
    \ket{\Phi^{+}} \equiv \frac{1}{\sqrt{D}}  \sum_{i \in \mathbb{Z}_{D}} \ket{i} \otimes \ket{i}
\end{equation}
and applies the QPUF unitary $U$ on one half of the state and leaves the other half unchanged. As a result, they obtain the states,
\begin{equation}
 \frac{1}{\sqrt{D}} \sum_{i \in \mathbb{Z}_{D}} \ket{i} \otimes U\ket{i} \equiv \ket{Q}
\end{equation}
The verifier stores this resulting set of $M$ states $\ket{Q}$ and returns the QPUF $U$ to the prover. During verification, the verifier measures the unchanged part of all the states to get the outcomes $\{m_{i}\}_{i \in \mathbb{Z}_{M}}$ and corresponding basis state $\{U\ket{m_{i}}\}_{i \in \mathbb{Z}_{M}}$ on the unmeasured halves. The verifier then sends the computational basis states $\ket{m_{i}}$ to the prover, who applies their QPUF to it and returns the response states. Finally, the verifier checks the match between the remaining half of the Bell states and the responses using a SWAP test \cite{SWAP_Mina}. If for all the $M$ SWAP test checks the test is passed, then the verifier accepts, otherwise rejects.
Security in this model relies on the fact that, before measurement, even the verifier does not know which basis state will be chosen. For the adversary to achieve an overall success probability exceeding order of \(\frac{1}{2^M}\) in the \(M\) SWAP Tests, they must guess states with high overlap with the correct basis states. However, the correct basis state corresponds to a 1-dimensional subspace of the Hilbert space, which has a measure of \(\frac{1}{\text{dimension of the Hilbert space}}\). For \(n\)-qubit systems, this dimension is \(2^n\), making the probability of the adversary guessing correctly vanishingly small.

Using the additional entanglement from Bell states is efficient and not more costly than previous approaches for the following reasons:
\begin{itemize}
    \item The verifier stores both halves of the Bell state locally, making it easier to maintain entanglement in practical applications.
    \item In previous models, states produced by applying Haar random unitaries are stored. Since Haar random unitaries naturally create highly entangled states, the process already involves managing such states. Therefore, our proposal does not add any extra cost in comparison.
\end{itemize}
    \subsection{Future Work}

Being these new QPUF models, there are several directions for future work that needs to be investigated:
\begin{itemize}
    \item As mentioned earlier, the methods of implementing a random measurement through a random measurement channel is problematic for a realistic implementation.  Finding more realistic implementation of the random measurement QPUF is thus an important open question. 
    \item Creating a manufacturing process that produces Haar or close-to-Haar unclonable unitaries might be impossible. We considered the Haar-unitary model as just the simplest theoretical proof of concept model showing that unforgeability can be achieved. For practical applications, more realistic unitary distributions still allowing for a theoretical security proof need to be found. 
    However, if the adversary is restricted to only query access, \textit{Pseudo Random Unitaries} (PRUs) \cite{doosti2022connection} or \textit{Unitary Designs} \cite{boche2019simultaneous, boche2017randomness} can serve as QPUFs, as demonstrated in \cite{doosti2022connection}. This is because the query database generated by PRUs or unitary designs is nearly indistinguishable from that of Haar random unitaries. However, in practice, they fall short as a solution, since PRUs can be implemented using efficient-depth quantum circuits, enabling an adversary with physical device access to learn them gate by gate.
\item We have considered only noiseless PUFs, thus cost of dealing with noise in this QPUF still needs to be
investigated.
\end{itemize}
    Finally, we highlight that QPUFs are advantageous over competing schemes like $\lq$\emph{Quantum Key Distribution} (QKD)'\cite{Metger2023, upadhyaya2021dimension} and $\lq$\emph{Quantum Money}' \cite{gavinsky2012quantum, Wiesner}.  In contrast to our QPUF model, a fundamental requirement in Quantum Money and QKD based schemes is that the classical information about the measurement basis needs to be stored \textbf{secretly} by the verifier. This makes them a $\lq$\emph{trusted party}' in the security scheme, opening avenues for possible privacy leakage.  

\section{Notation and Preliminaries}
We denote random variables with uppercase letters (e.g. $X$) and for pure quantum states we follow the Dirac bra-ket (e.g. $\ket{\psi}$) notation. Lower case Greek letters (e.g. $\rho$) are used for labelling density operators.

The trace distance \cite{NielsenChuang2010},\cite{M.Wilde2016}  between two density matrices $\rho$ and $\sigma$ is denoted as $0 \leq D(\rho,\sigma) \leq 1$. The uppercase Greek letter $\Lambda$ is used to denote a completely positive trace preserving (CPTP) map or quantum channel \cite{NielsenChuang2010}, \cite{M.Wilde2016}. The diamond norm distance \cite{M.Wilde2016} between two such quantum channels $\Lambda_{1}$ and $\Lambda_{2}$ is denoted as $||\Lambda_{1} - \Lambda_{2}||_{\diamond}$ . 

$\negl(\lambda)$ denotes a negligible function of lambda that decays exponentially to $0$ with increasing values of $\lambda$, e.g., $\frac{1}{2^{\lambda}}$ is a negligible function of $\lambda$. 

For any integer $D$, we define the set $\mathbb{Z}_{D}$ as follows:
\begin{equation}
    \mathbb{Z}_{D} \coloneqq \left[0, \cdots , D-1 \right].
\end{equation}

\subsection{Computational Security}

In contrast to the information-theoretic concept of $\lq$perfect secrecy' \cite{book}, $\lq$computational security' \cite{book} considers the limitations on the computational power of the adversary and also permits a negligible probability of success for the adversary in breaching the security scheme. Perfect secrecy schemes do not put any restrictions on the computational power of an adversary and also do not allow for any non-zero probability of success for the adversary in breaching the scheme, even with \emph{unlimited computational power}! These two relaxations make computational security schemes practical and feasible. Hence, in our analysis we would also opt for this approach. 


\subsubsection{Computational Security Parameter}

Any cryptographic scheme can be conceptualized as a sequence of problems, and breaching the scheme entails solving this sequence. The $\lq$\emph{computational security parameter}'
\cite{book} is an integer valued parameter, denoted by the Greek letter $\lq$$\lambda$' and often visualised as a bitstring of ones $\textbf{1}^{\lambda}$, which serves to parameterize the spacial and/or time complexity  of the problems within a cryptographic scheme. In most applications, such as key or token-based schemes, the security parameter can be perceived as the length of the input bitstrings to the key or token generation protocol. This parameter also characterizes the computational capabilities of both the trusted party and the adversary.

For a cryptographic scheme to achieve computational security, a fundamental requirement is that the complexity of the problems underlying the scheme should \emph{not} belong to the \emph{polynomial class} \cite{book}, while the computational power of both the trusted party and the adversary should fall within the \textit{polynomial class} \cite{book}. The general proof strategy for establishing computational security involves calculating or bounding the probability of  success for the adversary in breaching the scheme as a function of the security parameter $\lambda$ and demonstrating an exponential decay of this probability with the scaling of the parameter. This shows the practicality of this approach by providing a clear understanding of how security scales in relation to the scaling of resources.

\subsection{Haar Random von Neumann Measurement Channel}

For a $D$-dimensional Hilbert space $\mathbb{C}^{D}$, the computational basis $\{\ket{i}\}_{i = 0}^{D-1}$ forms a complete basis set. It follows trivially that for any Haar random unitary $U$ \cite{M.Ozlos}, we can obtain a complete basis set $\{U\ket{i}\}_{i = 0}^{D-1}$. The measurement channel for measurement on such a basis is obtained from the quantum instrument \cite{M.Wilde2016},
\begin{equation}
\label{eq: Channel_Def}
    \Lambda_{U}(\rho) = \sum_{i \in \mathbb{Z}_{D}} \ketbra{i}{i} \otimes \Pi_{i}^{U} \rho \Pi_{i}^{U \dagger},
\end{equation}
where the projection operators $\Pi_{i}^{U}$ are defined as
\begin{equation}
\label{eq:PVM}
    \Pi_{i}^{U} \equiv U \projector{i}U^{\dagger}.
\end{equation}
When tracing the first system, the quantum instrument becomes a measurement channel described by the Projection Valued Measures (PVMs) \cite{M.Wilde2016} $\{\Pi_{i}^{U}\}$. 
The post-measurement state after getting a measurement outcome $i$ becomes
\begin{equation}
\label{Haar token}
    \Pi_{i}^{U} \rho \Pi_{i}^{U\dagger} = U\projector{i}U^{\dagger},
\end{equation}
where
\begin{equation}
\label{eq: Token_prob}
    p_{i} = \Tr \left[\Pi_{i}^{U} \rho \right]
\end{equation}
is the outcome probability for a measurement outcome $i$.


\subsection{Quantum Phase Estimation as an Approximate von Neumann Measurement}
\label{QPE}
We highlight the preliminaries of Quantum Phase Estimation Algorithm \cite{Kitaev96} to show that it can be used as a model to approximate a Haar random von Neumann measurement. As a result, in \cref{PE-QPUF} we will explain how it can be used to derive a quantum circuit based implementation for our proposed QPUF.   
\subsubsection{Phase Measurement}

 Let $U$ be any unitary.
    While $U$ is not Hermitian, it is still normal, meaning that $U$ is unitarily diagonalizable (equivalently, its Hermitian and anti-Hermitian parts commute) and, thus, it has a spectral decomposition and defines a quantum measurement. 
    The eigenvalues of a unitary are complex roots of unity (elements of the complex unit circle).
    Equivalently, we can define the measurement of the Hermitian operator $H$ such that $U=e^{iH}$, namely the measurement of $-i\log U$.
    This changes the measurement outcomes (which can be done with classical processing, since the outcomes are classical values), but preserves the measurement projectors.
    For convenience, we choose a dimension $d$ and we consider the measurement of the hermitian operator
    \[\phi \coloneqq \frac{\log U}{i\frac{2\pi}{d}},\]
    with eigenvalues fulfilling $\phi_i \in [0,d[$ when we write all the eigenvalues of $U$ in the form $e^{i\frac{2\pi}{d} \phi}$.
    Below, $d$ will become both the dimension of an ancilla system and the amount of precision we use to estimate each of the complex phases of the eigenvalues of $U$.

\subsubsection{Phase Estimation Algorithm}
\label{sec:phase-estimation-algorithm}
    Let $d$ be the dimension of an additional quantum system $\mathbb{C}^{d}$, ancillary to the system of $U$, and let $F$ and $F^\dagger$ be the Fourier Transform and the inverse Fourier Transform respectively in this $d$-dimensional system.
    The phase-estimation~\cite{Kitaev96} circuit for $U$ with outcomes in $\mathbb{Z}_{d}$ is:
    \begin{equation}
    \begin{quantikz}
    &&
    \lstick{$\ket{0}$} &\gate{F^\dagger}&\ctrl{2}  &  \gate{F} & \meterD{\ket{k}} & \cw & k  
    \\
    U_k =&&
    &           &             &           &             &    &   &   &
    \\
    &&
    &\qw &\gate{U}& \qw & \qw
    \end{quantikz}
    \label{eq:kraus-circuit}
    \end{equation}
    where $CU$ is the controlled unitary defined as
    \begin{equation}
        CU = \sum_{k\in\mathbb{Z}_{d}} \projector{k} \otimes U^k. \notag \label{eq: CU}
    \end{equation}
    Note that we are making an abuse of notation here, as we are not performing a standard controlled unitary. This circuit also corresponds to the $d$ Kraus operators $U_k$ of the quantum instrument
    \begin{equation}
    \label{eq: QPE Channel}
    \Lambda_{U}^{QPE}(\rho) = \sum_{k\in\mathbb{Z}_{d}} \projector{k} \otimes U_k \rho U_k^\dagger    
    \end{equation}
    produced by the circuit. 
    
    Notice, that repetitions of this algorithm always commute, independently of $d$ or $U$.
    Indeed, all Kraus operators and, thus, also any two instances of applying $\Lambda_{U}^{QPE}$, commute with each other because they overlap only on the target where the various powers of $U$ commute, as shown in the circuit below
    \begin{equation*}
    \begin{quantikz}
    \lstick{$\ket{0}$} &\gate{F^\dagger}&\ctrl{2}  &  \gate{F} & \meterD{\ket{k}} & \cw \ k  
    \\
    &
    &           &             &           &             &    &   &   &
    \\
    U_k U_l = 
    & &\gate{U}& \gate{U} & \qw &
    = U_l U_k.
    \\
    &
    &           &             &           &             &    &   &   &
    \\
    &
    \lstick{$\ket{0}$} &\gate{F^\dagger}&\ctrl{-2}  &  \gate{F} & \meterD{\ket{l}} & \cw \ l
    \end{quantikz}
    \end{equation*}
    This property will simplify later the analysis of the QPUF in \cref{PE-QPUF}.
    Notice, also, that the same argument gives that the $U_k$'s commute with $U$ itself, meaning $U_k$ are also diagonal in the same basis.
    A consequence of this is that the conditional un-normalized states
    \[U_{k_n} \dots U_{k_{0}} \ket\psi\]
    and the probabilities
    \begin{align*}
      p_{k_{0} \dots k_n}
    &= \bra\psi U_{k_{0}}^\dagger \dots U_{k_n}^\dagger U_{k_n} \dots U_{k_{0}} \ket\psi \\
    &= \bra\psi \left(U_{k_{n}}^{\dagger}U_{k_{n}}\right) \dots \left(U_{k_{0}}^{\dagger}U_{k_{0}}\right) \ket\psi  
    \end{align*}
    are all symmetric in $k_{0}, \dots, k_n$.


\subsubsection{Phase Estimation Operators}

 We write the unitary $U$ and the Fourier transform $F$ as 
    \begin{align}
        U &= \sum_{i\in \mathbb{Z}_{D}} e^{i \frac{2\pi}{d}\phi_{i}} \projector{\phi_{i}}
        &
        F &= \frac{1}{\sqrt{d}}\sum_{x,y \in \mathbb{Z}_{d}} e^{- i \frac{2\pi}{d} xy} \ketbra{x}{y},
    \end{align}
    where $d$ is the dimension of the ancillary control system  which $F$ acts on, $D$ is the dimension of the target system which $U$ acts on, and $\ket{\phi_{i}}$ are the eigenstates of U, with corresponding eigenvalue $e^{i \frac{2\pi}{d}\phi_{i}}$, where each $\phi_{i} \in [0,d[$  is \emph{unknown}.
    Notice how we use $d$ and not $D$ to define each $\phi_{i} $, which is due to the fact that the precision of the phase estimation is defined by the ancillary control system and not the original target system.
    \\
    The explicit form of the Kraus operators in the circuit of \cref{eq:kraus-circuit} is
    \begin{align}
        U_k &
        = \frac{1}{d} (\bra{k}F\otimes\identity) CU \qty(F^\dagger\ket{0} \otimes \identity)
        \\&
        = \frac{1}{d} \sum_{x\in \mathbb{Z}_{d}} \bra{k}F\ket{x}     \cdot U^x
        = \frac{1}{d} \sum_{x\in \mathbb{Z}_{d}} e^{-i\frac{2\pi}{d}kx} \cdot U^x
        \\&
        = \sum_{j\in \mathbb{Z}_{d}} 
            \qty( \frac{1}{d} \sum_{x\in \mathbb{Z}_{d}} e^{i\frac{2\pi}{d}(\phi_{j}-k)x} ) 
            \projector{\phi_{j}}.
    \end{align}
    When $\phi_{j}$ is an integer, the geometric series in parenthesis is the Kronecker delta $\delta_{k,\phi_{j}}$.
    In all other cases (in general whenever $\phi_{j}-k\neq 0$), we have 
    \begin{align}
        \sum_{x \in \mathbb{Z}_{d}} e^{i \frac{2\pi}{d} (\phi_{j}-k)x}
        &=
        \frac{1-e^{i2\pi(\phi_{j}-k)}}{1-e^{i \frac{2\pi}{d} (\phi_{j}-k)}} \\
        &=
        \frac{e^{i\pi(\phi_{j} -k)}}{e^{i\frac{\pi}{d}(\phi_{j}-k)}}
        \frac{\sin(\pi(\phi_{j}-k))}{\sin(\frac{\pi}{d} (\phi_{j}-k))}. \notag
    \end{align}
    When divided by $d$, this term is an approximation of the normalized $\sinc$ function, if $d$ is sufficiently large.
    
    Namely, we can define (with a rescaling of the input by $\pi$ for convenience)
    \begin{align}
        \sins_d(x) &\coloneqq
        \begin{cases}
            1 & x=0
            \\
            \frac{\sin(\pi x)}{d\sin(\frac{\pi}{d} x)}& x \neq 0
        \end{cases}
        &
        \sinc(x) &\coloneqq
        \begin{cases}
            1 & x=0
            \\
            \frac{\sin(\pi x)}{\pi x}& x \neq 0,
        \end{cases}
    \end{align}
  and have the following bound, coming from $\sin x \leq x$:
    \[\sins_d(x) \geq \sinc (x).\]
    Notice that both functions are zero when $x$ is a non-zero integer and that $\sins_d$ is symmetric just like $\sinc$.
    We can finally write the Kraus operators as
    \begin{align}
    \label{eq: QPE Kraus Operators}
        U_k &
        = \sum_{j\in \mathbb{Z}_{D}}
            \frac{e^{i\pi(\phi_{j}-k)}}{e^{i\frac{\pi}{d}(\phi_{j}-k)}}
            \sins_d(\phi_{j}-k)
            \projector{\phi_{j}},
    \end{align}
    which \emph{are diagonal in the eigenbasis of $U$}, as we anticipated in \cref{sec:phase-estimation-algorithm}.
    When tracing the first quantum system, the quantum instrument $\Lambda_{U}$ becomes a measurement channel with POVM \cite{M.Wilde2016} elements 
    \begin{align}
        M_k 
        \coloneqq U_k^\dagger U_k
       & = \sum_{j\in \mathbb{Z}_{D}}
            \sins^{2}_{d}(\phi_{j}-k)
            \projector{\phi_{j}}.
    \end{align}
Notice that, while $\sins_d$ has period $2d$, $\sins^{2}_d$ has period $d$, which is why the sum in $\sum_k M_k = \identity$ can be taken in any contiguous set of $d$ integers, namely in $N, N+1, ..., N + d -1$ for any integer $N$. For completeness, we note that the isometry $V_U$ before the measurement is given by
    \begin{align}
        V_U &
        = 
            \begin{quantikz}
            \lstick{$\ket{0}$} &\gate{F^\dagger}&\ctrl{1}  &  \gate{F} & \qw
            \\
            & &\gate{U}& \qw
            \end{quantikz}
        = \sum_{k \in \mathbb{Z}_{d}} \ket{k} \otimes U_k
        \\&
        = \sum_{k \in \mathbb{Z}_{d},\ j\in \mathbb{Z}_{D}}
            \frac{e^{i\pi(\phi_{j}-k)}}{e^{i\frac{\pi}{d}(\phi_{j}-k)}}
            \sins_d(\phi_{j}-k)
            \ket{k} \otimes \projector{\phi_{j}}.
    \end{align}
    
\subsubsection{Phase Estimation Outcome}

In the special case when each $\phi_i$ is an  integer value, namely when the eigenvalues of $U$ are integer powers of $e^{i \frac{2\pi}{d}}$, then the phase estimation exactly produces the measurement defined by $U$ with preservation of the post measurement state. Thus, it becomes equal to a von Neumann measurement on the eigenbasis of $U$.
    For all other cases, the outcome will be a probabilistic and integer estimation of the quantity $\phi_i$.
    
    It can be now intuitively understood why repeated phase estimations commute.
    Indeed, when $\Lambda_{U}^{QPE}$ is exactly the measurement defined by $U$ then it is clear that different applications of $\Lambda_{U}^{QPE}$ must commute, since they would be  equivalent to different measurements of the same observable.
    The commutativity statement in the previous section says that commutativity holds in general even for approximate measurements.
    
    Notice that phase estimation is a fully general model of measurements, because for any observable (Hermitian) $H$ we can create the following unitary
    \[e^{i \frac{H}{\norm{H}_\infty}},\]
    which has the exact same spectral projectors as $H$ (normalization by $\norm{H}_\infty$ is to guarantee that no phases loop around $\pm \pi$. Any constant above $\norm{H}_\infty / \pi$ would work).

\section{Results}
In this section, we describe our results.

\subsection{Measurement based QPUF (MB-QPUF) scheme}

As shown in \cref{fig:MB-QPUF}, the protocol can be described in the framework of standard QPUF/PUF based protocols \cite{Arapinis2021quantumphysical}, namely, it has two phases, a query phase done by the verifier and a verification phase.

\begin{figure*}
    \centering
    \includegraphics[scale = 0.8]{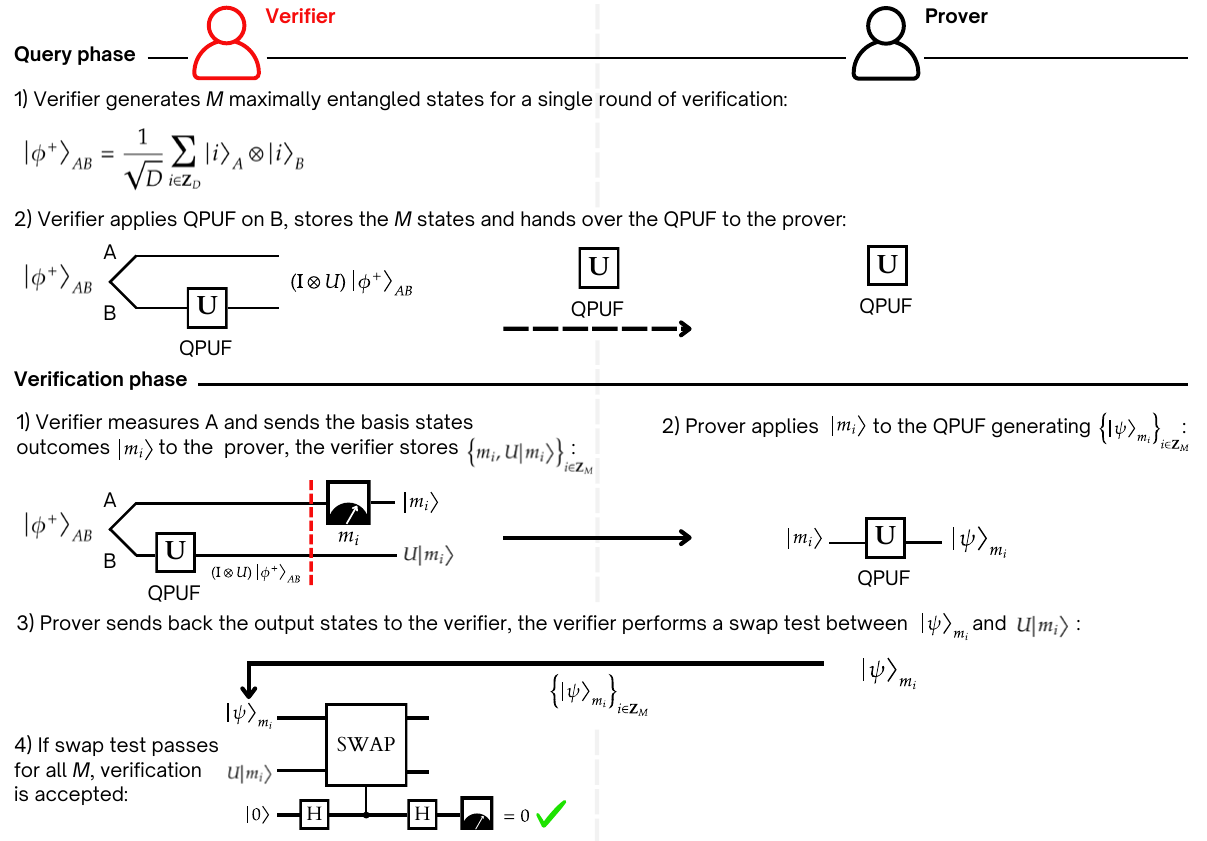}
    \caption{Single Round verification with $M$ trials. The query phases describes how the verifier stores $M$ Choi states of the QPUF $U$ and transfers the device to the prover. The verification phase describes, the challenge generation via measurements and subsequent verification with SWAP test. }
    \label{fig:MB-QPUF}
\end{figure*}

\subsubsection{Verifier's Query Phase}

The data collection of the verifier is described in the following steps:

\begin{itemize}
    \item Given $N,M = poly(\lambda)$(\textit{lowest degree of the polynomial is at least $1$}), the verifier initialises $NM$ many maximally entangled states on two $D-$dimensional systems labelled $A,B$, $\ket{\Phi^{+}}_{AB}$, where,
    \begin{equation}
    \ket{\Phi^{+}}_{AB} \equiv \frac{1}{\sqrt{D}}\sum_{i \in \mathbb{Z}_{D}} \ket{i}_{A} \otimes \ket{i}_{B}. 
    \end{equation}
    \item Then they apply the QPUF $U$ locally on system $B$ of all the maximally entangled states to obtain the following states,
    \begin{equation}
    \left(\mathbb{I} \otimes U \right)\ket{\Phi^{+}}_{AB} = \frac{1}{\sqrt{D}}\sum_{i \in \mathbb{Z}_{D}} \ket{i}_{A} \otimes U \ket{i}_{B} 
    \end{equation}
    \item Finally they store these states and handover the QPUF $U$ to the prover.
\end{itemize}

\subsubsection{Prover's Verification}

The setup describe here allows for $N$ rounds of verification where $M$ many maximally entangled states are used for each round. The detailed steps are described as follows:

\begin{itemize}
    \item When the prover wishes to verify their identity, the verifier measures the system label $A$ for $M$ maximally entangled states, obtaining $M$ measurement outcomes $\{m_{i}\}_{i \in \mathbb{Z}_{M}}$. This measurement will make the states in the other systems $\{U\ket{m_{i}}\}_{i \in \mathbb{Z}_{M}}$. 
    \item The verifier then sends the computational basis states corresponding to the obtained measurement outcomes $m_{i}$, i.e., $\ket{m_{i}}$ to the prover. 
    \item The prover then simply receives these states and applies their QPUF on these states and sends back response states $\{\ket{\psi}_{m_{i}}\}_{i \in \mathbb{Z}_{M}}$ to the verifier.
    \item The verifier then performs the SWAP Test \cite{SWAP_Mina} on all $M$ pairs, $U\ket{m_{i}}$ and $\ket{\psi}_{m_{i}}$. If the Swap test is passed for all states, the verification is accepted or else it is rejected. 
\end{itemize}
\subsubsection{Security}
Let $\mathcal{A}$ be any arbitrary adversary with query access to the QPUF $U$. Then they would be able to create a query database,
\begin{equation}
    Q_{\mathcal{A}} \coloneqq \{\rho^{i}_{in}, \rho^{i}_{out} \equiv U\rho^{i}_{in}U^{\dagger}\}_{i \in \mathbb{Z}_{|Q_{\mathcal{A}}|}}.
\end{equation}
Given a security parameter $\lambda$, we consider $|Q_{\mathcal{A}}| = poly(\lambda)$. In other words, we consider a Quantum Polynomial Time Adversary (QPT). For $|Q_{\mathcal{A}}| = \exp(\lambda)$, it is straightforward to see that no QPUF will be secure as the adversary would be able to perform efficient tomography.

Let the $M$ many measurement outcomes at the verifier's end be $\{m_{i}\}_{i \in \mathbb{Z}_{M}}$. Now, the winning strategy of the Adversary for passing all the $M$ rounds of the SWAP test could in general be an entangled state over $M$ systems. However, in \cite{SWAP_Mina} it was proven that the best strategy for the adversary would be to send separable states. Therefore, let the adversary prepare $M$ many guess states $\{\rho^{i}_{\mathcal{A}}\}_{i \in \mathbb{Z}_{M}}$. We can then write the overall winning probability of the adversary for a single round of verification,
\begin{align}
\label{eq: MB-QPUF prob}
    P_{\mathcal{A}} &= \prod_{i \in \mathbb{Z}_{M}} \left(\frac{1}{2} + \frac{1}{2}F^{2}(\rho^{i}_{\mathcal{A}},U\ketbra{m_{i}}U^{\dagger}) \right) \notag \\
    &= \prod_{i \in \mathbb{Z}_{M}} \left(\frac{1}{2} + \frac{1}{2}\left| \Tr[\rho^{i}_{\mathcal{A}}U\ketbra{m_{i}}U^{\dagger}] \right|^{2} \right) 
\end{align}
where $F(\cdot,\cdot)$ denotes the Uhlmann Fidelity \cite{NielsenChuang2010} between two states.
Notice that, since $U$ is a random variable distributed according to the Haar measure, $P_{\mathcal{A}}$ is also a random variable. Since, each round of verification is independent the overall success probability of the adversary will be a product of all the single round probabilities. For simplicity, we focus on a single round of verification, showing that its maximum success probability is negligible in the security parameter, and thus the overall success probability is even smaller.

\begin{definition}[Measurement Selective Unforgeability]
  A unitary QPUF, when used in the MB-QPUF model described above, is defined as \textit{measurement selective unforgeable} if,
  for any QPT adversary with query access to the QPUF, the overall success probability \( P_{\mathcal{A}} \) for a single round of verification is at most negligible in the security parameter.
\end{definition}
We then formally prove that any MB-QPUF is \textit{measurement selective unforgeable}.

\begin{theorem}[Measurement Selective Unforgeability]
For any security parameter \(\lambda\) and the number of trials \(M = poly(\lambda)\), in a single verification round, the expected success probability of any adversary is bounded by:
\[
\mathbb{E}[P_{\mathcal{A}}] \leq \frac{1}{2^M} + \negl(\lambda) = \negl(\lambda).
\]
Consequently, any MB-QPUF scheme is measurement-selective unforgeable.
\end{theorem}
\begin{proof}
From \cref{eq: MB-QPUF prob}, we have the probability of success as,
\begin{align*}
    P_{\mathcal{A}} &= \prod_{i \in \mathbb{Z}_{M}} \left(\frac{1}{2} + \frac{1}{2}\left| \Tr[\rho^{i}_{\mathcal{A}}U\ketbra{m_{i}}U^{\dagger}] \right|^{2} \right), \\
    &\leq \frac{\bigg(1 + \max_{i} \Tr[\rho^{i}_{\mathcal{A}}U\ketbra{m_{i}}U^{\dagger}]\bigg)^{M}}{2^{M}}, \\
    &\leq \frac{1 + \max_{i} \Tr[\rho^{i}_{\mathcal{A}}U\ketbra{m_{i}}U^{\dagger}] \left(\sum_{k = 1}^{M} \binom{M}{k}\right)}{2^{M}}, \\
    &= \frac{1 - \max_{i} \Tr[\rho^{i}_{\mathcal{A}}U\ketbra{m_{i}}U^{\dagger}]}{2^{M}}, \\ 
    &+ \max_{i} \Tr[\rho^{i}_{\mathcal{A}}U\ketbra{m_{i}}U^{\dagger}].
\end{align*}
Define, \begin{equation*}
    Q \equiv \max_{i} \Tr[\rho^{i}_{\mathcal{A}}U\ketbra{m_{i}}U^{\dagger}].
\end{equation*}
Therefore, we have, from the linearity of expectation,
\begin{align}
    \E[P_{\mathcal{A}}] &\leq \frac{1 - \E[Q]}{2^{M}} + \E[Q], \notag \\
    &\leq \frac{1}{2^M} + \E[Q].
\end{align}
Let \( \Pi_{\mathcal{A}} \) denote the projector onto the subspace spanned by states in the adversary's query database \( Q_{\mathcal{A}} \). The maximum dimension of this subspace is \( |Q_{\mathcal{A}}| \). For any state \( U\ket{m_i} \), the variable \( Q \) represents the probability with which the adversary successfully guesses it. 

Define \( \rho_{\mathcal{A}} \) as the state generated by the adversary if \( U\ket{m_i} \in \operatorname{span}(\Pi_{\mathcal{A}}) \), and \( \rho^{\perp}_{\mathcal{A}} \) as the state generated if \( U\ket{m_i} \in \operatorname{span}(\mathbb{I} - \Pi_{\mathcal{A}}) \). The variable \( Q \) can then be expressed as:
\begin{align}
    Q &= \Tr[\rho_{\mathcal{A}} U \ketbra{m_i} U^{\dagger}] \cdot \Tr[\Pi_{\mathcal{A}} U \ketbra{m_i} U^{\dagger}] \notag \\
    &\quad + \Tr[\rho^{\perp}_{\mathcal{A}} U \ketbra{m_i} U^{\dagger}] \cdot \Tr[(\mathbb{I} - \Pi_{\mathcal{A}}) U \ketbra{m_i} U^{\dagger}], \notag \\ 
    &\leq 1 \cdot Q_1 + Q_2 \cdot 1 .
\end{align}
where we define,
\begin{subequations}
    \begin{align}
        Q_1 &\equiv \Tr[\Pi_{\mathcal{A}} U \ketbra{m_i} U^{\dagger}], \\
        Q_2 &\equiv \Tr[\rho^{\perp}_{\mathcal{A}} U \ketbra{m_i} U^{\dagger}]. 
    \end{align}
\end{subequations}
Now, by linearity of expectation we have,
\begin{align*}
    \E[Q] &= \E[Q_1]+ \E[Q_2], \\
    &\leq \frac{|Q_{\mathcal{A}}|}{D} + \frac{1}{D - |Q_{\mathcal{A}}|} .
\end{align*}
where in the last step, we have applied \cref{Thm: Existential Unforgeability} to obtain the bound on $\E[Q_2]$.
Since, $D = \exp(\lambda)$, for example for n qubits, it is $2^{n}$, and $|Q_{\mathcal{A}}|,M = poly(\lambda)$ by definition, we have,
\begin{equation}
    \E[Q] \leq \negl(\lambda).
\end{equation}
Finally, we have,
\begin{align}
    \E[P_{\mathcal{A}}] &\leq \frac{1}{2^{M}} + \negl(\lambda) ,\notag \\
    &= \negl(\lambda).
\end{align}
This, completes the proof.
\end{proof}

\subsection{Ideal QPUF}

In this section we define a new model for QPUFs, namely we construct a non-unitary quantum channel which would serve as a QPUF.

\subsubsection{Hardware Requirements}

In accordance with the framework in \cite{Arapinis2021quantumphysical}, we redefine the essential hardware requirements for the Haar random von Neumann measurement channel to be considered an "\textit{ideal QPUF}".

\begin{definition}[Ideal QPUF]
\label{QPUF Theorem}
Any quantum instrument $\Lambda_{U}$ in \cref{eq: Channel_Def}, becomes an $\lq$\emph{ideal QPUF}' if it satisfies the following hardware requirements for some security parameter $\lambda$:
\begin{itemize}
 \label{Hardware_Req}
    \item
    \textbf{Unknownness:} The Kraus operators $\{U \ketbra{i}{i} U^{\dagger}\}_{i \in \mathbb{Z}_{D}}$ for the channel $\Lambda_{U}$ remain unknown.
    \item \textbf{Collision-Resistance:}
The probability of having any two post-measurement pure states $\ket{\psi_{i}}$ and $\ket{\psi_{j}}$ corresponding to different measurement outcomes $i \neq j$ being furthest apart in trace distance satisfies
\begin{equation}
\Pr\big[D(\ketbra{\psi_{i}}{\psi_{i}},\ketbra{\psi_{j}}{\psi_{j}}) = 1 \big] \geq 1 - \negl(\lambda)   
\end{equation} 
\item \textbf{Robustness:} 

The probability of having any two post-measurement pure states $\ket{\psi_{i}}$ and $\ket{\psi_{j}}$ corresponding to the same measurement outcomes $i = j$ being equal satisfies
\begin{equation}
\Pr\big[D(\ketbra{\psi_{i}}{\psi_{i}},\ketbra{\psi_{j}}{\psi_{j}}) = 0 \big] \geq 1 - \negl(\lambda).
\end{equation} 
\item \textbf{Uniqueness:} 

The probability of any two distinct QPUF instruments, $\Lambda_{U_{i}}$ and $\Lambda_{U_{j}}$, being furthest apart in diamond norm
\begin{equation}
\label{eq: Uniqueness}
\Pr \left[||\Lambda_{U_{i}} - \Lambda_{U_{j}}||_{\diamond} = 2 \right] \geq 1 - \negl (\lambda),
\end{equation}
where $2$ is the maximum value in diamond norm distance.
\end{itemize}
\end{definition}


\subsection{Query structure for ideal QPUF}
For any collection of input pure states $\mathcal{D} \coloneqq \{\ket{\psi_{i}}\}$,we define the query set of the \emph{ideal QPUF} $\Lambda_{U}$ as
\begin{equation}
\label{eq: Query Structure}
    Q \coloneqq \{\ket{\psi_{i}}, q_{i}, U \ket{q_{i}}\} ,
\end{equation}
where $q_{i} \in \mathbb{Z}_{D}$ is the measurement outcome corresponsding to the $i^{th}$ input state $\ket{\psi_{i}}$ and $U\ket{q_{i}}$ is the post-measurement state obtained after measurement through the instrument $\Lambda_{U}$. We denote the computational basis by $\ket{q_{i}}$ for the classical label $q_{i}$. As mentioned earlier \cref{eq: Channel_Def}, measuring with $\Lambda_{U}$ yields the basis $U\ket{q_{i}}$ for the classical outcome $q_{i}$, instead of $\ket{q_{i}}$. It is evident that $|Q| = |\mathcal{D}|$. This set $Q$ being made of the CRPs of \emph{ideal} QPUF $\Lambda_{U}$ provides a unique fingerprint. 

\subsection{Ideal Authentication Protocol}

\subsubsection{Token Generation Protocol}
\label{HelperConst_Ideal}
In $Q$, each state $U\ket{q_{i}}$ serves as a secure quantum token  with the classical value $q_{i}$ as its identifier, which is kept public as shown in \cref{Fig:PUF_construct} (red). Each token in $Q$ is sent to different authentic parties, who store them in their quantum memory. Notice that, since $U\ket{q_{i}}$ is a quantum state and $q_{i}$ is public, there is no \emph{trusted party} in the protocol.


\subsubsection{Token Verification Protocol}
Upon verification, the token state $U\ket{q_{i}}$ is requested back by the verifier. The state is then used as input state for the \emph{ideal QPUF} channel, $\Lambda_{U}$ in \cref{QPUF Theorem}, to obtain the pair of measurement outcome and post-measurement state as $(q'_{i}, U\ket{q'_{i}})$ after measurement as shown in \cref{Fig:PUF_construct} (blue). If $q_{i} = q'_{i}$, then the verification is successful and the post-measurement state $U\ket{q'_{i}}$ is sent back to the authentic party for a future verification. Otherwise, verification is failed/rejected.

\begin{figure}
\begin{center}
\begin{tikzpicture}
\filldraw[fill=red!15] (-1.2,2.5) rectangle (3.2,-0.5);
\filldraw[fill=blue!15] (3.2,2.5) rectangle (7,-0.5);
\draw(0,0) rectangle(2,2);
\node[red] at (-0.15,2.7) {$Generation$};
\node[blue] at (4.2,2.7) {$Verification$};
\node at (1,1) {$\Lambda_{U}$};
\draw(0,1) -- (-1,1) ;
\draw(2,1.5) -- (2.5,1.5);
\draw(2,0.5) -- (3.5,0.5);
\node[above] at (-0.5,1) {$\ket{\psi_{in}}$};
\node[above] at (2.3,1.5) {$m_{i}$};
\node[above] at (2.6,0.5) {$U\ket{m_{i}}$};

\draw(3.5,0) rectangle(5.5,2);
\node at (4.5,1) {$\Lambda_{U}$};
\draw(5.5,1.5) -- (6,1.5);
\node[above] at (5.8,1.5) {$m'_{i}$};
\draw(5.5,0.5) -- (6.5,0.5);
\node[above] at (6.1,0.5) {$U \ket{m'_{i}}$};
\end{tikzpicture}    
\end{center}
\caption{Generation (in red) and verification (in blue) methods for CRPs. Verification is successful if $q_{i} = q'_{i}$ and it is otherwise failed. }
\label{Fig:PUF_construct}
\end{figure}


%

\subsection{Unforgeability Notion for an Ideal Quantum PUF}

For any \emph{ideal QPUF}, the concept of unforgeability lies in the difficulty of replicating its set of queries (CRPs) by any adversary having query access to it.

Let $\mathcal{A}$ be an arbitrary adversary having query database $Q_{\mathcal{A}}$. From \cref{eq: Query Structure}, the structure of $Q_{\mathcal{A}}$ will be,
\begin{equation}
\label{eq: Adversary Query Structure}
    Q_{\mathcal{A}} \coloneqq \{\ket{\psi^{\mathcal{A}}_{i}}, q^{\mathcal{A}}_{i}, U \ket{q^{\mathcal{A}}_{i}}\},
\end{equation}
where $\{\ket{\psi^{\mathcal{A}}_{i}}\}$ are input states initialised by the adversary for making a query. Then, for a security parameter $\lambda$, the adversary is termed as $\lq$\emph{Quantum Exponential Time} (QET)' if $|Q_{\mathcal{A}}| = \exp(\lambda)$, and $\lq$\emph{Quantum Polynomial Time} (QPT)' if $|Q_{\mathcal{A}}| = \text{poly}(\lambda)$. Consider an arbitrary measurement outcome $q \notin Q_{\mathcal{A}}$. Based on the information gained from the query database, let the adversary produce an arbitrary pure state $\ket{\psi_{\mathcal{A}}}$ to pass QPUF verification for the outcome $q$ and let the probability of passing, for the adversary, be $P_{q}$.
Following \cref{eq: Token_prob}, we can write it as
\begin{equation}
\label{eq: Random Variable Definition}
    P_{q} = \Tr \left[\Pi_{q}^{U} \ketbra{\psi_{\mathcal{A}}}{\psi_{\mathcal{A}}} \right].
\end{equation}
Since $U$ is a random variable, $P_{q}$ is also a random variable. Considering the  storage and processing capabilities of the adversary, we can now proceed to define two notions of unforgeability.

\begin{definition}[Quantum Exponential Unforgeability]
\label{Def: Exp_Unforgeability}
Let $|Q_{\mathcal{A}}| = \mathrm{exp}(\lambda)$, where $\lambda$ is the security parameter and $Q_{\mathcal{A}}$ is the query database of an arbitrary adversary, denoted as $\mathcal{A}$. If $\forall$ classical measurement outcomes $q \notin Q_{\mathcal{A}}$ we have
\begin{equation}
    E [P_{q}] = \negl(\lambda),
\end{equation}
then the QPUF is deemed exponentially unforgeable.
\end{definition}

\begin{definition}[Quantum Existential Unforgeability]
\label{Def: Exist_Unf}
Let $|Q_{\mathcal{A}}| = \mathrm{poly}(\lambda)$, where $\lambda$ is the security parameter and $Q_{\mathcal{A}}$ is the query database of an arbitrary adversary, denoted as $\mathcal{A}$. 
If $\forall$ classical measurement outcomes $q \notin Q_{\mathcal{A}}$ we have
\begin{equation}
    E [P_{q}] = \negl(\lambda),
\end{equation}
then the QPUF is deemed existentially unforgeable.
\end{definition}

\subsubsection{Comparison with standard definitions of Existential Unforgeability}

Notice that there is a one-to-one correspondance between the post-measurement states and the classical outcomes. In other words, if we input any post-measurement state $U\ket{q_{i}}$ to the QPUF channel $\Lambda_{U}$ we get the outcome $q_{i}$ always \emph{deterministically}. In general, the existential unforgeability condition requires correct prediction of an output given an input that is different from all inputs in the query database of an adversary. In the way we defined existential unforgeability we require the correct prediction of the input given an output. Since there is a one-to-one correspondence between inputs and outputs, (\textit{i.e. if we consider only the set of post-measurement states and the classical outcomes corresponding to them as domain and range, $\Lambda_{U}$ is invertibile for those choices of sets}) these two notions are mathematically equivalent.

\subsection{Security Results for Ideal QPUF}

The following theorems present the security results for the
ideal QPUF.

\begin{theorem}[No Exponential Unforgeability]
\label{Thm: Exponential Unforgeability}
No Haar random basis measurement based ideal QPUF $\Lambda_{U}$ has exponential unforgeability.
\end{theorem}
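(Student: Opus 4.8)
The plan is to refute \cref{Def: Exp_Unforgeability} directly by exhibiting one Quantum Exponential Time adversary $\mathcal{A}$ together with an outcome $q \notin Q_{\mathcal{A}}$ for which $E[P_q]$ is non-negligible (in fact $E[P_q]=1$). The structural fact I would exploit is special to this QPUF: the post-measurement responses $\{U\ket{i}\}_{i\in\mathbb{Z}_D}$ form an \emph{exact} orthonormal basis of $\mathbb{C}^D$, so collision-resistance holds with probability $1$ and ``knowing every response but one'' already pins down the last response up to an irrelevant global phase. Here $D=2^{\lambda}$ is exponential in the security parameter, so an adversary is allowed exponentially many, i.e. $\exp(O(\lambda))=\mathrm{poly}(D)$, queries.

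First I would let $\mathcal{A}$ query $\Lambda_U$ with arbitrary inputs until it has observed $D-1$ distinct outcomes; by a coupon-collector estimate this costs $O(D\log D)=\exp(O(\lambda))$ queries and singles out exactly one never-observed outcome $q$, so that $q \notin Q_{\mathcal{A}}$ \emph{by construction}. Next $\mathcal{A}$ must upgrade its observed copies of the responses $\{U\ket{i}\}_{i\neq q}$ into classical descriptions by state tomography, which needs many copies per response. The delicate point is to gather these copies without ever triggering outcome $q$: I would have $\mathcal{A}$ proceed adaptively, always preparing inputs inside its current estimate of the span of the already-seen responses, i.e. orthogonal to its estimate of $U\ket{q}$. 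Since an input orthogonal to $U\ket{q}$ yields outcome $q$ with probability $p_q=\Tr[\Pi_q^U\rho]=0$, such queries never insert $q$ into $Q_{\mathcal{A}}$ while steadily refining the $D-1$ observed descriptions, and the total cost stays $\exp(O(\lambda))$.

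Once $\mathcal{A}$ knows the $D-1$ orthonormal vectors $\{U\ket{i}\}_{i\neq q}$ classically, their orthogonal complement in $\mathbb{C}^D$ is one-dimensional, so $\mathcal{A}$ computes $U\ket{q}$ uniquely up to a global phase and prepares $\ket{\psi_{\mathcal{A}}}=U\ket{q}$. Submitting this state to verification of $q$ gives, via $\Pi_q^U=U\projector{q}U^\dagger$,
\begin{equation}
P_q=\Tr\!\left[\Pi_q^U\,U\projector{q}U^\dagger\right]=\Tr\!\left[U\projector{q}U^\dagger\right]=1
\end{equation}
for every $U$ (the phase cancels), hence $E[P_q]=1\neq\negl(\lambda)$, contradicting exponential unforgeability.

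The conceptual core -- reconstructing the single missing response by orthogonality -- is immediate; I expect the main obstacle to be the resource bookkeeping of the middle step, namely certifying that the adaptive, subspace-restricted querying simultaneously (i) attains tomographic precision good enough that the computed orthogonal complement overlaps $U\ket{q}$ by an amount bounded away from $0$ (so $E[P_q]$ stays non-negligible even with finite precision), and (ii) never accidentally records outcome $q$, all within an exponential query budget. Bootstrapping the initial rough estimates obtained from the first coupon-collected copies into high-precision descriptions is where the care is needed.
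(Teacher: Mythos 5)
Your proposal takes a genuinely different route from the paper, and the step you yourself flag as the ``main obstacle'' is not a bookkeeping matter but a fatal gap. For context: the paper's own proof is a one-line process-tomography argument --- with $D^{2}=4^{\lambda}$ queries the adversary runs standard process tomography on $\Lambda_{U}$, obtains a complete classical description of the channel, and is declared to have ``effectively cloned the QPUF.'' The paper makes no attempt to keep any outcome out of the query database. You read \cref{Def: Exp_Unforgeability} more literally and try to engineer $q \notin Q_{\mathcal{A}}$; this is a real issue the paper glosses over (after process tomography every outcome has appeared, so the clause ``$\forall q \notin Q_{\mathcal{A}}$'' is vacuous for that adversary). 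The problem is that your resolution of it cannot be made to work.

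The circularity is this: feeding a held post-measurement state back into $\Lambda_{U}$ returns the same state and outcome deterministically, so it yields no new copies and no new information; fresh copies require fresh inputs, and any fresh input $\rho$ triggers the forbidden outcome with probability $\Tr[\Pi_{q}^{U}\rho]$, which can only be suppressed if $\rho$ already lies almost entirely in the span of the other responses --- precisely the classical knowledge the tomography is meant to produce. Quantitatively, the bootstrap cannot start. A pure-state estimate built from $n$ copies of a $D$-dimensional state has fidelity $O((n+1)/(n+D))$, i.e.\ it is essentially uninformative until $n = \Omega(D)$. Hence the first $\Omega(D)$ copies of each response must be harvested with inputs uninformed about the direction of $U\ket{q}$: inside the unknown subspace of dimension $k$, each such query hits each unobserved outcome with probability about $1/k$, and harvesting $\Omega(D)$ copies of any one response costs $\Omega(kD)$ such queries, so every unobserved outcome is hit $\Omega(D)$ times in expectation. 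Thus all $D$ outcomes enter $Q_{\mathcal{A}}$ with probability $1-e^{-\Omega(D)}$ long before any estimate beats random guessing. Conversely, if you stop within the coupon-collector budget $O(D\log D)$ so that one outcome stays unobserved, you hold only $O(\log D)$ copies of each response; the per-response fidelities are then $O(\log D/D)$, and the orthocomplement computed from them has expected overlap $O(\operatorname{poly}(\log D)/D)=\negl(\lambda)$ with $U\ket{q}$ (going fully quantum does not rescue this: producing a state orthogonal to supplied copies is a universal-NOT-type task, for which coherent strategies are known not to beat measure-and-prepare). So neither $E[P_{q}]=1$ nor even non-negligible $E[P_{q}]$ is reachable along your route; the two requirements (i) and (ii) you list at the end are not simultaneously satisfiable, rather than merely delicate. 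Any honest proof of \cref{Thm: Exponential Unforgeability} must either adopt the paper's blunter reading (possession of a full classical description of $\Lambda_{U}$ counts as a break, regardless of which outcomes were observed) or reformulate the definition so that the forgery target need not be absent from the database.
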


\begin{proof}
With an exponential number of queries, any adversary can conduct process tomography on an unknown quantum channel, obtaining a complete description of it. For a quantum channel on a $D$-dimensional Hilbert space, $D^{2}$ queries suffice for standard process tomography \cite{NielsenChuang2010}. Since for a $\lambda$-qubit system $D= 2^{\lambda}$, we can then allow the adversary to make $4^{\lambda}$ queries and thus effectively clone the QPUF.

This completes the proof.
\end{proof}

\begin{theorem}[Existential Unforgeability]
\label{Thm: Existential Unforgeability}
For any Quantum Polynomial Time adversary (QPT) with query database $Q_{\mathcal{A}}$, the expected passing probability in an ideal QPUF verification of any classical measurement outcome $q \notin Q_{\mathcal{A}}$ is bounded as follows
\begin{equation}
\label{eq: Result EXist_unf}
    E[P_{q}] \leq \frac{1}{D - |Q_{\mathcal{A}}|},
\end{equation}
where $P_{q}$ denotes the probability of success for passing the QPUF verification for an outcome $q$ and $D$ is the dimension of the unitary in $\Lambda_{U}$.
\end{theorem}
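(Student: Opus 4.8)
The plan is to reduce the passing probability $P_q$ to the squared overlap between the adversary's guess and a single \emph{unseen} column of the Haar-random $U$, and then to use the invariance of the Haar measure to show that this column is uniformly distributed in a subspace of dimension at least $D-|Q_{\mathcal{A}}|$. First I would unpack what the adversary actually holds. By the query structure in \cref{eq: Adversary Query Structure} and \cref{Haar token}, every post-measurement state the adversary collects is exactly a column $U\ket{q_i^{\mathcal{A}}}$ of $U$ indexed by the observed outcome $q_i^{\mathcal{A}}$. Hence, after $\mathrm{poly}(\lambda)$ queries, the whole quantum part of the adversary's record consists of the orthonormal set $\{U\ket{j}\}_{j\in Q_{\mathcal{A}}}$, spanning a subspace $W$ with $\dim W = n \leq |Q_{\mathcal{A}}|$. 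Combining \cref{eq: Random Variable Definition} with \cref{eq:PVM}, the passing probability for a target outcome $q\notin Q_{\mathcal{A}}$ and the adversary's chosen state $\ket{\psi_{\mathcal{A}}}$ is $P_q = \bra{\psi_{\mathcal{A}}}\Pi_q^{U}\ket{\psi_{\mathcal{A}}} = |\bra{\psi_{\mathcal{A}}} U\ket{q}|^2$.

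Next I would exploit orthogonality and Haar invariance. Because $U$ is unitary and $q\notin Q_{\mathcal{A}}$, the target column $U\ket{q}$ is orthogonal to every seen column, so it lies in the orthogonal complement $W^{\perp}$, which has dimension $D-n \geq D-|Q_{\mathcal{A}}|$. The crucial claim is that, conditioned on the full adversarial record $\mathcal{R}$, the vector $U\ket{q}$ is still Haar-uniform on the unit sphere of $W^{\perp}$. The clean way to see this is via the likelihood: the probability of the observed outcome sequence factorizes as $\prod_i |\bra{\psi_i^{\mathcal{A}}} U\ket{q_i^{\mathcal{A}}}|^2$, in which every factor depends only on a \emph{seen} column (and on the adversary's own input, itself a function of earlier seen data, so adaptivity does not interfere). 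Thus the likelihood is independent of the unseen columns, and the posterior law of the unseen part of $U$ equals its Haar prior conditioned on the seen columns, i.e. a Haar-random orthonormal frame of $W^{\perp}$. In particular $U\ket{q}$ is uniform on the unit sphere of $W^{\perp}$, so its second moment is the normalized projector, $E\!\left[\,U\ketbra{q}{q}U^{\dagger}\mid\mathcal{R}\,\right] = P_{W^{\perp}}/(D-n)$.

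Finally I would assemble the bound. For any (possibly adaptively chosen) guess $\ket{\psi_{\mathcal{A}}}$,
\[
E[P_q \mid \mathcal{R}] = \bra{\psi_{\mathcal{A}}} E\!\left[\, U\ketbra{q}{q}U^{\dagger} \mid \mathcal{R} \,\right] \ket{\psi_{\mathcal{A}}} = \frac{\bra{\psi_{\mathcal{A}}} P_{W^{\perp}} \ket{\psi_{\mathcal{A}}}}{D-n} \leq \frac{1}{D-n} \leq \frac{1}{D-|Q_{\mathcal{A}}|},
\]
using $\bra{\psi_{\mathcal{A}}} P_{W^{\perp}} \ket{\psi_{\mathcal{A}}} \leq \braket{\psi_{\mathcal{A}}}{\psi_{\mathcal{A}}} = 1$ and $\dim W^{\perp}\geq D-|Q_{\mathcal{A}}|$. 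Averaging over $\mathcal{R}$ preserves the inequality and yields \cref{eq: Result EXist_unf}. Note the bound is worst-case tight: it is largest when all $|Q_{\mathcal{A}}|$ outcomes are distinct, and it is attained when the adversary places $\ket{\psi_{\mathcal{A}}}$ entirely inside $W^{\perp}$.

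I expect the main obstacle to be precisely the conditioning step, namely rigorously ruling out that the measurement statistics accumulated over the adaptive queries leak usable information about the unseen column $U\ket{q}$. The temptation is to hand-wave that the adversary ``only learns the columns it sees''; the real content is the likelihood factorization above, which makes the observed-outcome probabilities depend on the seen columns alone and thereby certifies conditional uniformity in $W^{\perp}$. Stating this exchangeability/invariance argument carefully — and checking that it survives ancilla-assisted and adaptive strategies — is where the care is genuinely needed; the remaining moment computation and the bound $\bra{\psi_{\mathcal{A}}}P_{W^{\perp}}\ket{\psi_{\mathcal{A}}}\leq 1$ are routine.
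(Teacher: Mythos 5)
Your proposal is correct and takes essentially the same route as the paper: both reduce $P_{q}$ to the overlap with the unseen column $U\ket{q}$, argue that this column is Haar-uniform on the orthogonal complement of the columns the adversary has already collected, and then average the rank-one projector to obtain the normalized projector on that complement, giving $E[P_{q}] \leq 1/(D-|Q_{\mathcal{A}}|)$. The only difference is in how that conditional uniformity is justified --- the paper invokes the sequential Gram--Schmidt-style column construction of the Haar measure, while you use a likelihood-factorization/posterior argument --- and your version is marginally tidier, since you keep the step $\bra{\psi_{\mathcal{A}}}P_{W^{\perp}}\ket{\psi_{\mathcal{A}}}\leq 1$ as an explicit inequality and absorb degeneracies via $\dim W = n \leq |Q_{\mathcal{A}}|$, where the paper handles both points somewhat more informally.
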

The proof of this theorem can be found in \cref{Proof: Theorem 2}. For any $\lambda$-qubit
system, the value of $D$ would be $2^{\lambda}$ where $\lambda$ is the security parameter. By definition of QPT, $|Q_{\mathcal{A}}| = \mathrm{poly}(\lambda)$. Hence, from \cref{eq: Result EXist_unf} we have, $\forall q$,
\begin{align*}
    E[P_{q}] &\leq \frac{1}{2^{\lambda} - \mathrm{poly}(\lambda)} \\
    &= \negl(\lambda).
\end{align*}
Thus, \cref{Thm: Existential Unforgeability} implies that any Haar random basis measurement QPUF $\Lambda_{U}$ is \emph{existentially unforgeable}. 


\begin{figure*}
\centering
\includegraphics[width=0.9\textwidth,clip]{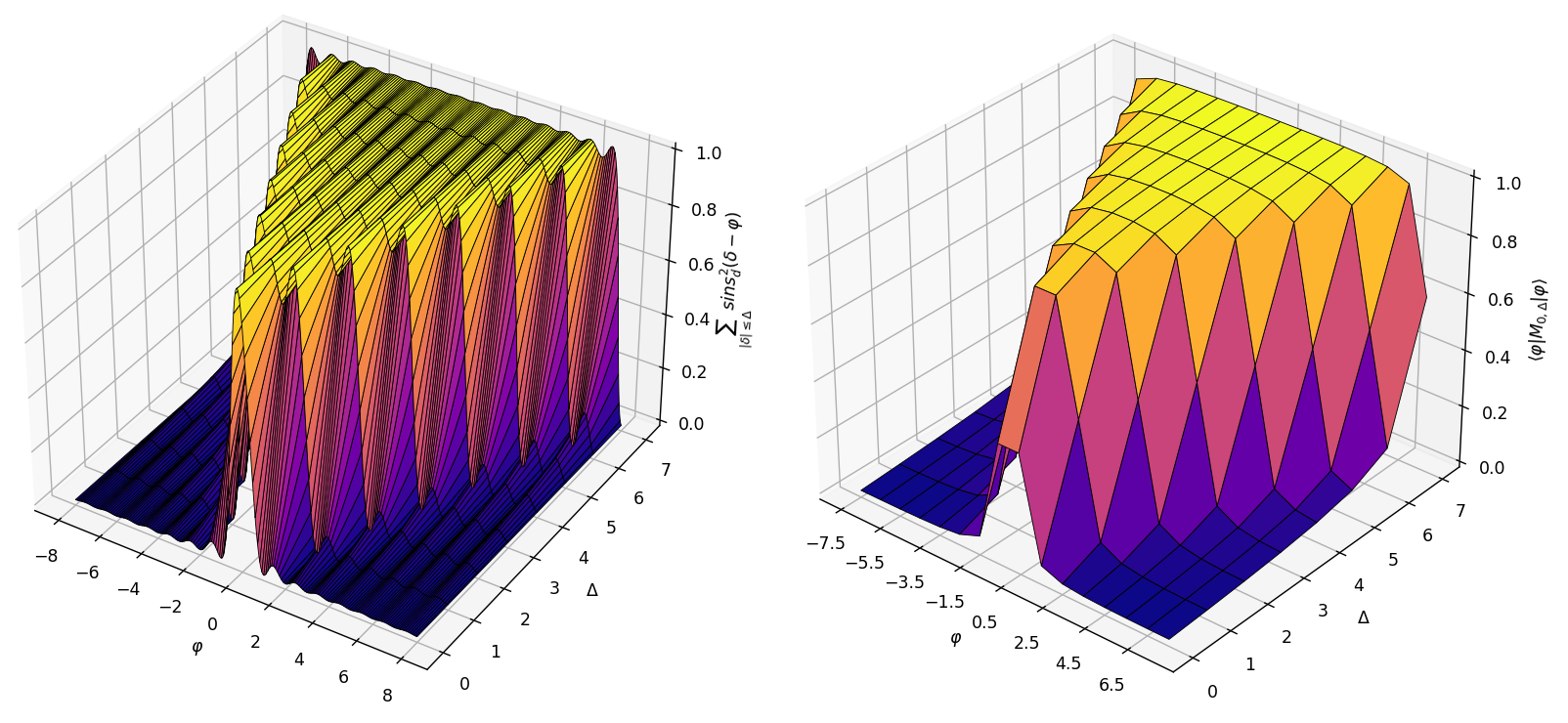}
\caption{Approximation of generated output state to a $sins_{d}^{2}(\delta - \phi)$: \textbf{Left)} Eigenvalue of eigenvector $\ket{\phi}$ for the verification of $M_{0,\Delta}$. \textbf{Right)} Special case where $\phi\in [-1/2, +1/2]$ for $M_{0,\Delta}=\sum_{\abs{\delta}\leq\Delta,j} sins_{d}^{2}\ket{\phi}\bra{\phi}$. The validity of the approximation increases with larger values of $\Delta$.}
\label{ver1}       
\end{figure*} 


\subsection{Implementations}

\subsubsection{Random Control-NOT based QPUF (RCNOT-QPUF)}
\label{RCNOT-QPUF}
Let us assume that, given access to a \emph{Haar Random} unitary $U$, we have oracle access to its inverse, $U^{\dagger}$. We then have an exact implementation of the "\emph{ideal  QPUF}" as the controlled-NOT operation with randomized control followed by a standard basis measurement. The QPUF circuit with outcomes in $\mathbb{Z}_{D}$ can be described as:
    \begin{equation*}
    \begin{quantikz}
   && 
   &\gate{U}&\ctrl{2}  &  \gate{U^{\dagger}} & \qw 
    \\
   \Pi^{U}_{i}= &&          &             &           &             &    &   &   &
    \\
    &&
    \lstick{$\ket{0}$}& \qw & \targ{} & \qw & \meterD{\ket{i}} & \cw & i \; \; ,
    \end{quantikz}
    \end{equation*}
where the C-NOT is defined as
\begin{equation}
    CX = \sum_{i \in \mathbb{Z}_{D}} \ketbra{i}{i} \otimes X^{i}, 
\end{equation}
and where $X$ is the generalised Pauli X gate defined as
\begin{equation}
    X \equiv \sum_{i \in \mathbb{Z}_{D}} \ketbra{i \oplus 1}{i}.
\end{equation}
Here we also make an abuse of notation, as we did in \cref{eq: CU}. This circuit also corresponds to the $D$ Kraus operators of the quantum instrument
\begin{equation}
    \Lambda_{U} (\rho) = \sum_{i \in \mathbb{Z}_{D}} \ketbra{i}{i} \otimes \Pi^{U}_{i} \rho \Pi^{U \dagger}_{i},
\end{equation}
where
\begin{equation}
    \Pi^{U}_{i} = U \ketbra{i}{i} U^{\dagger}.
\end{equation}
This is exactly the Haar random von Neumann measurement channel from \cref{eq: Channel_Def}. In this implementation, a crucial assumption is that the complete quantum instrument $\Lambda_{U}$ constitutes the QPUF, not solely the random unitary. This distinction is critical as it limits adversaries to querying $\Lambda_{U}$ rather than the unitary $U$ itself.

\subsubsection{Phase-Estimation based QPUF (PE-QPUF)}
\label{PE-QPUF}

\textbf{\emph{Hardware Requirements:}}
Since the QPE protocol approximates a von Neumann measurement, it would be an approximate implementation of the \emph{ideal QPUF} in \cref{Hardware_Req}. Again it is important to note the assumption that the entire quantum instrument $\Lambda_{U}^{QPE}$ is the QPUF. We then redefine the hardware requirements as:
\begin{definition}[Approximate QPUF]
Any QPE channel $\Lambda_{U}^{QPE}$ in \cref{eq: QPE Channel} constructed with Haar random unitary $U$ is considered an $\lq$approximate QPUF' if it satisfies the following hardware requirements for some security parameter $\lambda$:
\begin{itemize}
    \label{Hardware_Req QPE}
    \item
    \textbf{Unknownness:} The Kraus operators (\cref{eq:kraus-circuit}) $\{U_{k}\}_{k \in \mathbb{Z}_{d}}$  for the channel $\Lambda_{U}^{QPE}$ remain unknown.
    \item $\delta_{c}$-\textbf{Collision-Resistance:}
$\forall \Delta; 0 \leq \Delta \leq d-1$ , $\exists \delta_{c} ; 1 \geq \delta_{c} >0$ such that for any two different measurement outcomes $i , j$ satisfying $|i - j|\geq \Delta$ , the probability of the post measurement states $\ket{\psi_{i}},\ket{\psi_{j}}$ being different by at least $\delta_{c}$ in trace distance satisfies
\begin{equation}
\Pr \big[D(\ketbra{\psi_{i}}{\psi_{i}},\ketbra{\psi_{j}}{\psi_{j}}) \geq \delta_{c}\big] \geq 1 - \negl(\lambda).
\end{equation}
\item $\delta_{r}$-\textbf{Robustness:} 
$\forall \Delta; 0 \geq \Delta \geq d-1$ , $\exists \delta_{r} ; 1 \geq \delta_{r} >0$ such that for any two measurement outcomes $i , j$ satisfying $|i - j|\leq \Delta$, the probability of the post measurement states $\ket{\psi_{i}},\ket{\psi_{j}}$ being close by at least $\delta_{r}$ in trace distance satisfies
\begin{equation}
\Pr \big[D(\ketbra{\psi_{i}}{\psi_{i}},\ketbra{\psi_{j}}{\psi_{j}}) \leq \delta_{r}\big] \geq 1 - \negl(\lambda).
\end{equation}
\item $\delta_{\mu}$-\textbf{Uniqueness:} 
The probability of any two distinct QPUFs, $\Lambda_{U_{i}}^{QPE}$ and $\Lambda_{U_{j}}^{QPE}$, being distinguishable by at least an amount of $0 \leq \delta_{\mu} \leq 2$ in diamond norm satisfies
\begin{equation}
\label{eq: Uniqueness}
\Pr \left[||\Lambda_{U_{i}}^{QPE} - \Lambda_{U_{j}}^{QPE}||_{\diamond} \geq \delta_{\mu} \right] \geq 1 - \negl (\lambda).
\end{equation}
\end{itemize}
\end{definition}
Notice, that for any \emph{approximate QPUF}, if $\delta_{\mu} = 2$ and if $\forall \Delta > 0$; $\delta_{c} =1$ and $\delta_{r} = 0$, we get an \emph{ideal QPUF}.

\subsubsection{Approximate Authentication Protocol:}

In this section, we describe the authentication protocol with respect to the approximate implementation.

\paragraph{\textbf{Approximate Token Generation Protocol:}}
\label{section: PE-QPUF Helper Data Construction}
On a general input state $\ket{\psi_{in}}$ we apply the Quantum Phase Estimation circuit \cite{Kitaev96}, mentioned earlier in \cref{QPE}, with a Haar random unitary $U$,
    \begin{equation*}
    \begin{quantikz}
    \lstick{$\ket{0}$} &\gate{F^\dagger}&\ctrl{2}  &  \gate{F} & \meterD{\ket{m_{0}}} & \cw & m_{0}  
    \\
    &           &             &           &             &    &   &   &
    \\
    \lstick{$\ket{\psi_{in}}$}& \qw &\gate{U}& \qw & \qw \rstick{$\ket{\psi_{m_{0},\Delta}}$ ,}.
    \end{quantikz}
    \end{equation*}
to obtain a quantum token as the post-measurement state, $\ket{\psi_{m_{0},\Delta}}$, and its classical verifier value as the measurement outcome $m_{0}$.  
  

\begin{figure}
\centering
\includegraphics[width=0.45\textwidth,clip]{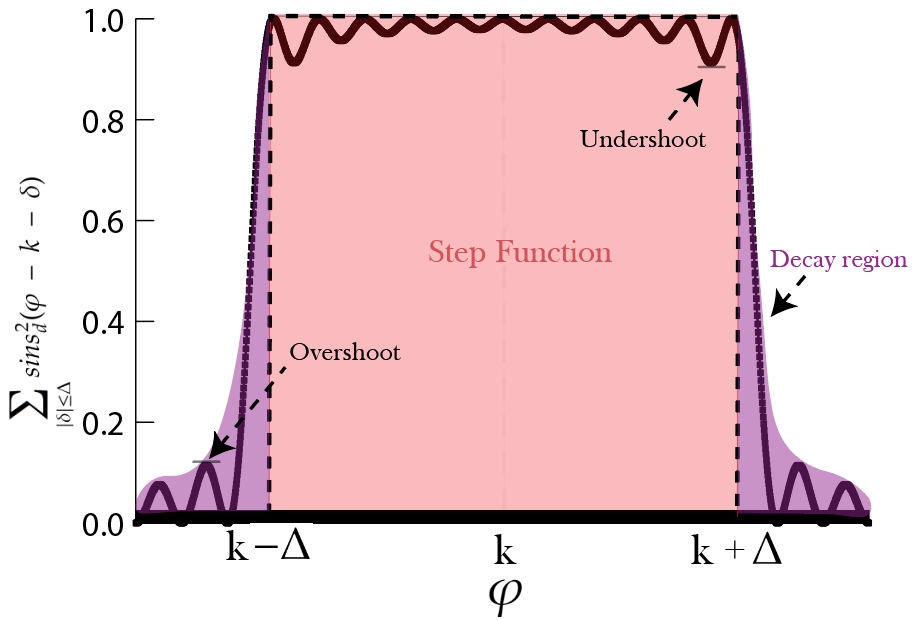}
\caption{Gibbs Phenomenon. Cross-section of \cref{ver1} corresponding to $\Delta=5$, showing step function approximation for the operator $M_{k,\Delta}$ demonstrating its approximate behaviour to the projector $\Pi_{k,\Delta}$ in \cref{eq: Ver_Projectors}. }
\label{Undershoot}  
\end{figure}  
\paragraph{\textbf{Approximate Token Verification Protocol:}}
\label{Subsubsec: QPE_Ver}


Upon verification, the token state $\ket{\psi_{m_{0},\Delta}}$ is requested back by the verifier. The state is then used as input for the same circuit mentioned above, to obtain a pair of measurement outcome and post-measurement state. We consider the $n$-th verification to be successful when the difference between outcomes $m_n$ and $m_{n-1}$ (either from generation, $n-1=0$, or previous verification, $n-1\geq 1$) is small.
    Since the algorithm is not exact, but provides only an estimation, we parameterize the decision boundary by an integer $0 \leq \Delta \leq  \frac{d}{2}$ and say that the verification is successful when the new measurement outcome satisfies
    \[ |\Delta_{n}| \coloneqq |m_n - m_{n-1} | \leq \Delta \mod d.\]
If the $n^{th}$ verification is successful, the classical verifier value is updated from $m_{n-1}$ to $m_{n}$ in the verifier's classical memory and the post-measurement state $\ket{\psi_{m_{n},\Delta}}$ is sent back to the authentic party for a further verification. Due to the commutativity of the Kraus operators in \cref{eq: QPE Kraus Operators}, the conditional probability of having an $n^{th}$ successful verification $P_{m_{n-1},\Delta}$ can be obtained as the expectation value of the operator $M_{m_{n-1},\Delta}$ with respect to the $(n-1)^{th}$ post-measurement state $\ket{\psi_{m_{n-1},\Delta}}$, 
\begin{align}
\label{eq: ISIT n-cond prob. def}
    P_{m_{n-1},\Delta} \equiv \bra{\psi_{m_{n-1},\Delta}}M_{m_{n-1},\Delta}\ket{\psi_{m_{n-1},\Delta}},
\end{align}
where for any $k \in \mathbb{Z}_{d}$ we define,
    \begin{align}
        M_{k,\Delta} 
        &\equiv \sum_{|\delta|\leq \Delta} M_{k + \delta (\mathrm{mod} \hspace{0.1cm} d)} \notag \\
        &
        = \sum_{j\in \mathbb{Z}_{D}} \sum_{|\delta|\leq \Delta} 
        \sins_{d}^{2}(k + \delta - \phi_{j}) \projector{\phi_{j}}
        \\&
        = \sum_{j\in\mathbb{Z}_{D}} \tilde{1}_\Delta(k-\phi_{j}) \projector{\phi_{j}},
        \label{eq:M_k_eigenvalues}
    \end{align}
and where we define
    \begin{equation}
        \tilde{1}_{\Delta}(x)
        =     \sum_{|\delta|\leq \Delta} \sins_{d}^{2}(\delta + x)
        = 1 - \sum_{|\delta| >   \Delta} \sins_{d}^{2}(\delta + x),
    \end{equation}
    which approximates the step function,
    \begin{equation}
        1_{\Delta}(x) = 1_{\Delta}(-x) =
        \begin{cases}
            1 & x \in     [-\Delta,\Delta]
            \\
            0 & x\notin [-\Delta,\Delta]
        \end{cases},
    \end{equation}
    as displayed in \cref{ver1} for $k=0$ ($x=\phi$).
    Recall that the Fourier transform of such a step function is the squared $\sinc$ function. Hence, it should not be a surprise that the $\sins$ series approximates $1_{\Delta}(k-\phi)$, as it approximates a $\sinc$ series for large ancilla dimension $d$. This is known in literature as $\lq$\emph{Gibbs Phenomenon}' \cite{Gibbs} which is dispayed in \cref{Undershoot}.
    
    Let
    \begin{equation}
       \label{eq: Ver_Projectors} \Pi_{k,\Delta} \equiv \sum_{\substack{j \in \mathbb{Z}_{D} \\
       |\phi_{j} - k| \leq \Delta}} \projector{\phi_{j}}.
    \end{equation}
Thus, intuitively, $M_{k,\Delta}$ approximates the projector $\Pi_{k,\Delta}$, which projects the post-measurement state in a $\Delta$ neighbourhood of $k$ as seen in \cref{Undershoot}.
\begin{definition}[$(d,D,\Delta)$-PE-QPUF]
Any QPE channel $\Lambda_{U}^{QPE}$ from \cref{eq: QPE Channel} with aniclla dimension $d$, Haar random Unitary dimension $D$ and decision boundary $\Delta$ satisfying the hardware requirements of \emph{approximate QPUF}
is defined as $(d,D,\Delta)$-PE-QPUF.
\end{definition}

\begin{theorem}[Authentic Verification Probability]
\label{Authentic Verifiation Probability}
For any $(d,D,\Delta)$-PE-QPUF with $\Delta \geq 2$, the conditional probability for the $n^{th}$ successful verification $P_{m_{n-1},\Delta}$, with $m_{n-1}$ as the measurement outcome at the $(n-1)^{th}$ round, can be lower bounded as
\begin{align}
\label{eq:Ver_lowerbound}
    P_{m_{n-1},\Delta} &> \left(1-  \sqrt{1 - f(\Delta)}\right)^{2} \; \; ,
\end{align}
where
\begin{align}
   f(\Delta) \equiv \left(1 - \frac{2}{\pi^{2}\left(\sqrt{\Delta} + \frac{1}{2}\right)}\right) \cdot \left(1 - \frac{2}{\pi^{2}(\Delta - \frac{1}{2})}\right). \notag 
\end{align}
\label{eq:lowerbound}
\end{theorem}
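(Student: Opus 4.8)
The plan is to work entirely in the eigenbasis $\{\ket{\phi_j}\}$ of $U$, in which \cref{eq:M_k_eigenvalues} shows that $M_{m_{n-1},\Delta}$ is diagonal with eigenvalues $\tilde 1_\Delta(m_{n-1}-\phi_j)$. First I would expand the post-measurement state as $\ket{\psi_{m_{n-1},\Delta}} = \sum_j a_j\ket{\phi_j}$ and observe, using the explicit Kraus operators of \cref{eq: QPE Kraus Operators}, that the populations obey $|a_j|^2\propto\sins_d^2(m_{n-1}-\phi_j)\times(\text{contributions of the earlier rounds})$, since the last operator applied to the token was $U_{m_{n-1}}$ and all $U_k$ are diagonal in this basis. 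Consequently $P_{m_{n-1},\Delta} = \sum_j|a_j|^2\,\tilde 1_\Delta(m_{n-1}-\phi_j)$ becomes a one-dimensional problem about the real function $\tilde 1_\Delta$, weighted by populations that the $\sins_d^2$ factor concentrates near $\phi_j\approx m_{n-1}$. Shifting so that $m_{n-1}=0$, everything reduces to estimating $\sum_j|a_j|^2\,\tilde 1_\Delta(\phi_j)$ with $|a_j|^2$ peaked at $\phi_j\approx 0$.

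To produce the characteristic $(1-\sqrt{1-f})^2$ shape I would write $P_{m_{n-1},\Delta} = \norm{\sqrt{M_{m_{n-1},\Delta}}\ket{\psi_{m_{n-1},\Delta}}}^2$ and apply the reverse triangle inequality $\norm{\sqrt{M}\ket{\psi}}\geq 1 - \norm{(\identity-\sqrt{M})\ket{\psi}}$, so that $P_{m_{n-1},\Delta}\geq\left(1-\norm{(\identity-\sqrt{M_{m_{n-1},\Delta}})\ket{\psi_{m_{n-1},\Delta}}}\right)^2$. It then suffices to bound the leakage $L\coloneqq\sum_j|a_j|^2\left(1-\sqrt{\tilde 1_\Delta(\phi_j)}\right)^2\leq 1-f(\Delta)$.

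The core estimate is a two-region split of $L$ at scale $\sqrt\Delta$. For the near region $|\phi_j|\leq\sqrt\Delta$ I would lower bound the eigenvalue $\tilde 1_\Delta(\phi_j)\geq f_2\coloneqq 1-\frac{2}{\pi^2(\Delta-\frac12)}$, using $\tilde 1_\Delta = 1 - \sum_{|\delta|>\Delta}\sins_d^2(\cdot+\delta)$ together with the comparison $\sins_d\geq\sinc$ and the tail sum $\sum_{|\delta|>\Delta}\sinc^2\lesssim\frac{2}{\pi^2(\Delta-\frac12)}$ (this is the Gibbs undershoot displayed in \cref{Undershoot}). For the far region I would use that the populations inherited the $\sins_d^2$ profile to bound their total weight $w_{\mathrm{far}}$ by $1-f_1$ with $f_1\coloneqq 1-\frac{2}{\pi^2(\sqrt\Delta+\frac12)}$, again via $\sins_d\geq\sinc$ and a $\sum 1/\delta^2$ tail beyond $\sqrt\Delta$. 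Writing $(1-\sqrt t)^2 = 1-(2\sqrt t - t)$ and using that $2\sqrt t - t$ is increasing on $[0,1]$ gives, on the near region, $(1-\sqrt{\tilde 1_\Delta(\phi_j)})^2\leq 1-(2\sqrt{f_2}-f_2)$, while bounding the far region crudely by $1$ yields $L\leq 1 - w_{\mathrm{near}}(2\sqrt{f_2}-f_2)$. Since $2\sqrt{f_2}-f_2\geq f_2$ for $f_2\leq 1$ and $w_{\mathrm{near}}\geq f_1$, this collapses to $L\leq 1-f_1 f_2 = 1 - f(\Delta)$, which is the claim. The hypothesis $\Delta\geq 2$ enters to keep all denominators positive and to guarantee $f(\Delta)\in(0,1]$, so that $1-\sqrt{1-f(\Delta)}\geq 0$ and the reverse-triangle step is meaningful.

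The main obstacle is the far-region weight bound: establishing $w_{\mathrm{near}}\geq f_1$ requires that the populations $|a_j|^2$ genuinely follow the $\sins_d^2$ concentration profile rather than being placed on eigenvalues far from $m_{n-1}$, so the honest-token structure (the $\sins_d^2$ weighting coming from $U_{m_{n-1}}$ and the near-nearest-integer nature of the outcome) must be invoked carefully. The accompanying analytic difficulty is making the $\sins_d\leftrightarrow\sinc$ comparison and the associated Gibbs-phenomenon tail sums precise enough to produce exactly the two closed-form factors of $f(\Delta)$ uniformly in the ancilla dimension $d$, rather than only in the large-$d$ limit.
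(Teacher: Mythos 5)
Your strategy is genuinely different from the paper's: you aim at a \emph{pointwise} bound on $P_{m_{n-1},\Delta}$ for the actually observed outcome $m_{n-1}$, via $P_{m_{n-1},\Delta}=\norm{\sqrt{M_{m_{n-1},\Delta}}\ket{\psi_{m_{n-1},\Delta}}}^{2}$ and the reverse triangle inequality. Unfortunately the step you yourself flag as ``the main obstacle'' is not a technicality to be handled carefully --- it is false, and no pointwise argument can repair it. The populations of the conditional post-measurement state are $|a_{j}|^{2}\propto|c_{j}|^{2}\sins_{d}^{2}(\phi_{j}-m_{n-1})$, where $|c_{j}|^{2}$ are the populations of the \emph{previous} state (see \cref{eq: QPE Kraus Operators}); the $\sins_{d}^{2}$ factor suppresses far eigenvalues only multiplicatively, and the normalization undoes that suppression whenever the prior weight near $m_{n-1}$ is small. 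Concretely, let the prior state be an eigenstate $\ket{\phi_{j}}$ with $|\phi_{j}-m_{n-1}|\gg\Delta$. The outcome $m_{n-1}$ still occurs with the small but nonzero probability $\sins_{d}^{2}(\phi_{j}-m_{n-1})$, and conditioned on it the post-measurement state is still exactly $\ket{\phi_{j}}$, so that $P_{m_{n-1},\Delta}=\tilde{1}_{\Delta}(m_{n-1}-\phi_{j})\approx 0$. Hence $w_{\mathrm{near}}$ can be arbitrarily close to $0$, your leakage bound $L\leq 1-f(\Delta)$ fails, and indeed no bound of the form $P_{m_{n-1},\Delta}>(1-\sqrt{1-f(\Delta)})^{2}$ can hold for \emph{every} outcome $m_{n-1}$. (The theorem's phrasing does read as a pointwise statement, which explains your route, but the statement is only true --- and is only proved in the paper --- in an averaged sense.)

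The missing idea is averaging over the random outcome, which is exactly how the paper's proof is structured. Writing $X$ for the outcome at round $n-1$ and $\ket{\psi}$ for the state entering that round, the paper computes $\E[P_{X,\Delta}]=\sum_{k}P_{k,\Delta}\Pr(X=k)$, and the crucial algebraic point is a cancellation your pointwise approach can never access: the normalization $\bra{\psi}M_{k}\ket{\psi}$ in the denominator of the conditional probability $P_{k,\Delta}$ cancels against the outcome probability $\Pr(X=k)=\bra{\psi}M_{k}\ket{\psi}$, leaving $\E[P_{X,\Delta}]=\sum_{k}\bra{\psi}M_{k,\Delta}M_{k}\ket{\psi}$. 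Your problematic unlucky-outcome branches contribute negligibly to this sum precisely because they are weighted by their tiny probability of occurring. This expectation is then lower bounded by $f(\Delta)$ using $\sins_{d}\geq\sinc$ and \cref{Lemma: Sinc lower bound} --- your two factors and your $\sqrt{\Delta}$ scale reappear here as the truncation parameter $b=\sqrt{\Delta}$ splitting the double $\sinc$ series --- and finally Markov's inequality converts the expectation bound into ``with probability at least $g(\Delta)=1-\sqrt{1-f(\Delta)}$ the conditional success probability exceeds $g(\Delta)$,'' whence the product $g(\Delta)^{2}$ in the claim. So your analytic ingredients (near/far splitting, $\sins$-versus-$\sinc$ comparison, $\sum 1/\delta^{2}$ tails) are the right ones, but they must be deployed inside this expectation-plus-Markov scaffold; applied to the conditional state itself, the argument breaks at the far-region weight bound.
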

The proof of this theorem can be found in \cref{Proof: Theorem 3}. From \cref{eq:Ver_lowerbound}, for any $(2^{\alpha + \lambda},2^{\alpha + \lambda},2^{\lambda})$-PE-QPUF where $\alpha,\lambda >1$, we have $P_{m_{n-1,\Delta}} > 1 - \negl(\lambda)$. In general we observe that, with higher values of $\Delta$, the verification probability approaches $1$, whereby giving us a closer approximation to a von Neumann measurement. This is not a surprise, as increasing values of $\Delta$ reduces the precision of the phase measurement in the QPE protocol. 
\begin{definition}(Verification rate)
Considering $N_{\mathrm{total}}$ tokens sent to an authentic party, if $N_{\mathrm{pass}}$ pass the verification, we define the verification rate as $0 \leq v_{R} \leq 1$, where,
\begin{equation}
    v_{\mathrm{R}} \equiv \frac{N_{\mathrm{pass}}}{N_{\mathrm{total}}}.  \label{eq:verification rate}
\end{equation}
\end{definition}
For an overall successful authentication it is then required that,
\begin{equation}
    v_{\mathrm{R}} \geq \left(1-  \sqrt{1 - f(\Delta)}\right)^{2}.
\end{equation}
\begin{figure*}
\centering
\includegraphics[width=0.9\textwidth,clip]{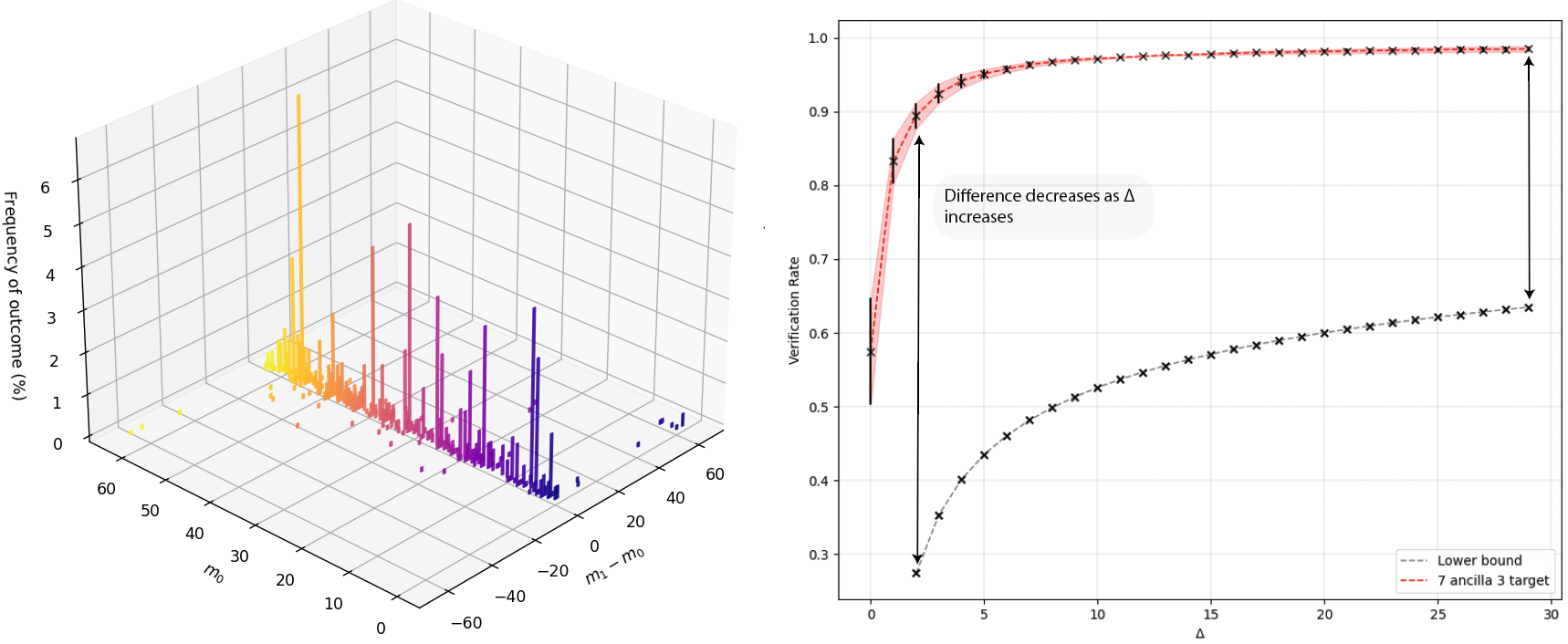
}
\caption{On the left, a $(2^{6},2^{6},\Delta)$-PE-QPUF is simulated with $10^{3}$ quantum states. $m_{0}$ and $m_{1}$ represent measurement outcomes at generation and $1^{st}$ verification, respectively. As expected, high peaks of outcome frequency are observed around low values of $m_{1} - m_{0}$ ($\Delta$). On the right, the simulation for a $(2^{7},2^{3},\Delta)$-PE-QPUF with $10^{4}$ states compares the analytical lower bound with the verification rate from \cref{eq:verification rate}.The red-shaded region represents the standard deviation over 5 iterations with $10^{4}$ states each. The difference between simulated results and the lower bound decreases with higher $\Delta$ values.}
\label{Fig:Multiple_ver}    
\end{figure*}
To validate our analytical predictions on $\Delta$ and $v_{R}$, we simulated the verification protocol for $(2^{6},2^{6},\Delta)$-PE-QPUF and $(2^{7},2^{3},\Delta)$-PE-QPUF on the IBM aer-simulator backend, with $10^{3}$ and $10^{4}$ states respectively, as shown in \cref{Fig:Multiple_ver} (left) and \cref{Fig:Multiple_ver} (right) respectively. The Haar random unitary was simulated pseudo-randomly via QR decomposition \cite{qr-decomp}.

\subsubsection{Security for PE-QPUF:}
\label{Section: QPE Security}

Let the adversary produce an arbitrary pure state $\ket{\psi_{\mathcal{A}}}$ to forge the QPUF verification. From \cref{eq: ISIT n-cond prob. def}, for any measurement outcome $m$, the probability of success for the adversary will be
\begin{equation}
    P_{m,\Delta} \equiv \bra{\psi_{\mathcal{A}}}M_{m,\Delta} \ket{\psi_{\mathcal{A}}}.
\end{equation}
Since $M_{m,\Delta}$ is random, $P_{m,\Delta}$ is also a random variable.
\begin{theorem}[$2\Delta$-Existential Unforgeability]
\label{Theorem: PE-QPUF Existential Unforgeability}
Consider a QPT adversary with query database $Q_{\mathcal{A}}$ and a $(2^{\alpha + \lambda},2^{\alpha + \lambda},2^{\lambda})$-PE-QPUF with $\alpha,\lambda > 1$. Let $m \notin Q_{\mathcal{A}}$ be any measurement outcome such that $\forall k \in Q_{\mathcal{A}}$; $|m-k| \geq 2 \Delta$. The  expected probability of the adversary for a successful verification of the outcome $m$ would be bounded as follows:
\begin{align}
    & \E[P_{m,\Delta}] \notag \\
    &\leq \left(1 - \frac{2}{\pi^{2}(\frac{\Delta}{2} - 1)}\right)\cdot \left(\frac{1}{\frac{d}{2(\Delta+\frac{\Delta}{2})} - |Q_{\mathcal{A}}|}\right) + \frac{2}{\pi^{2}(\frac{\Delta}{2} - 1)} \notag \\
   & = \left(1 - \frac{2}{\pi^{2}(2^{\lambda - 1} - 1)}\right)\cdot \left(\frac{1}{\frac{2^{\alpha}}{2(1 + \frac{1}{2}) }- |Q_{A}|}\right) + \frac{2}{\pi^{2}(2^{\lambda - 1} - 1)} \notag \\
    &= \negl(\alpha, \lambda).
\end{align}
\end{theorem}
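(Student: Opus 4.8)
The plan is to bound $\E[P_{m,\Delta}] = \E\!\left[\bra{\psi_{\mathcal{A}}} M_{m,\Delta}\ket{\psi_{\mathcal{A}}}\right]$ in two stages: first replace the ``soft'' operator $M_{m,\Delta}$ by a genuine spectral projector at the cost of an additive Gibbs error, and then bound the expectation of that projector by reducing to the ideal existential-unforgeability estimate of \cref{Thm: Existential Unforgeability}. Throughout I use that, by \cref{eq:M_k_eigenvalues} (a consequence of the diagonality in \cref{eq: QPE Kraus Operators}), every $M_{k,\Delta}$ is diagonal in the Haar-random eigenbasis $\{\ket{\phi_{j}}\}$ of $U$, so all operators involved commute and the analysis lives entirely in that fixed eigenbasis; this also means an adversary's query database only ever reveals information about eigenvectors, grouped by their phase.

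First I would establish a positive-semidefinite bound of the form
\begin{equation*}
M_{m,\Delta} \preceq \left(1 - \tfrac{2}{\pi^{2}(\frac{\Delta}{2}-1)}\right)\Pi_{m,\Delta+\frac{\Delta}{2}} + \tfrac{2}{\pi^{2}(\frac{\Delta}{2}-1)}\,\identity ,
\end{equation*}
where $\Pi_{m,\Delta+\frac{\Delta}{2}}$ is the projector of \cref{eq: Ver_Projectors} onto eigenvectors whose phase lies within $\Delta+\frac{\Delta}{2}$ of $m$. Since $M_{m,\Delta}$ is diagonal with eigenvalues $\tilde{1}_{\Delta}(m-\phi_{j})$, this reduces to the scalar statement that $0 \le \tilde{1}_{\Delta}(x) \le 1$ everywhere, while $\tilde{1}_{\Delta}(x) \le \frac{2}{\pi^{2}(\frac{\Delta}{2}-1)}$ once $|x| \ge \Delta+\frac{\Delta}{2}$. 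The latter follows from the same $\sins_{d}^{2}$-tail estimate used in the proof of \cref{Authentic Verifiation Probability}: for such $x$ every summand $\sins_{d}^{2}(\delta+x)$ in $\tilde{1}_{\Delta}(x)$ has argument at least $\frac{\Delta}{2}$ in modulus, and bounding $\sins_{d}^{2}$ by the $\sinc^{2}$ tail $\frac{1}{\pi^{2}y^{2}}$ and comparing with an integral yields the claimed $\epsilon_{\Delta} \coloneqq \frac{2}{\pi^{2}(\frac{\Delta}{2}-1)}$. Taking expectation and using $\bra{\psi_{\mathcal{A}}}\identity\ket{\psi_{\mathcal{A}}}=1$ gives
\begin{equation*}
\E[P_{m,\Delta}] \le (1-\epsilon_{\Delta})\,\E\!\left[\bra{\psi_{\mathcal{A}}}\Pi_{m,\Delta+\frac{\Delta}{2}}\ket{\psi_{\mathcal{A}}}\right] + \epsilon_{\Delta} .
\end{equation*}

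The second and main step is to show
\begin{equation*}
\E\!\left[\bra{\psi_{\mathcal{A}}}\Pi_{m,\Delta+\frac{\Delta}{2}}\ket{\psi_{\mathcal{A}}}\right] \le \frac{1}{\frac{d}{2(\Delta+\frac{\Delta}{2})} - |Q_{\mathcal{A}}|}.
\end{equation*}
Here I would coarse-grain the outcome range $\mathbb{Z}_{d}$ into $\frac{d}{2(\Delta+\frac{\Delta}{2})}=\frac{d}{3\Delta}$ disjoint windows of width $2(\Delta+\frac{\Delta}{2})=3\Delta$, and treat each window as a single effective outcome of a coarse-grained von Neumann measurement. The target projector $\Pi_{m,\Delta+\frac{\Delta}{2}}$ sits inside one such effective outcome, while I generously allow each of the $|Q_{\mathcal{A}}|$ queries to fully reveal the eigenvectors of one entire width-$3\Delta$ window; the hypothesis $|m-k|\ge 2\Delta$ for all $k\in Q_{\mathcal{A}}$ is exactly what guarantees that the $m$-window is disjoint from every window the adversary has touched, so that, after conditioning on the known eigenvectors, the eigenvectors spanning $\Pi_{m,\Delta+\frac{\Delta}{2}}$ are Haar-random in the orthogonal complement. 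The expected overlap of the fixed adversarial state with such a uniformly random subspace is then controlled by the same Haar-averaging argument as in \cref{Thm: Existential Unforgeability}, but with the effective dimension $\frac{d}{3\Delta}$ in place of $D$, giving the displayed bound. Substituting $d=2^{\alpha+\lambda}$ and $\Delta=2^{\lambda}$ turns $\frac{d}{3\Delta}$ into $\frac{2^{\alpha}}{3}$ and $\epsilon_{\Delta}$ into $\frac{2}{\pi^{2}(2^{\lambda-1}-1)}$, reproducing the two middle lines, and since the first term decays in $\alpha$ and the second in $\lambda$ the sum is $\negl(\alpha,\lambda)$.

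The hard part is the coarse-graining and Haar-averaging of the second step. Unlike the ideal QPUF, where each outcome is a single Haar-random eigenvector and the $\frac{1}{D-|Q_{\mathcal{A}}|}$ bound is immediate, here each effective window contains a \emph{random} number of eigenphases, so I must control this count (on average $\approx 3\Delta$ phases per width-$3\Delta$ window, using that the eigenbasis of a Haar $U$ is an independent Haar-random frame) and verify that summing the per-window overlaps still collapses to the clean $\frac{1}{d/(3\Delta)-|Q_{\mathcal{A}}|}$. I also have to make precise what ``a query reveals one window'' means for an adaptive adversary who re-feeds post-measurement states; commutativity and eigenbasis-diagonality (\cref{eq: QPE Kraus Operators}) are the levers that keep each query confined to its window and prevent adaptivity from amplifying the adversary's information. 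Getting the $2\Delta$ separation to line up exactly with the $\Delta+\frac{\Delta}{2}$ window half-width, so that disjointness holds with no leakage into the $m$-window, is the delicate bookkeeping that ties the two steps together.
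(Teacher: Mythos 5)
Your outer skeleton does match the paper's proof: your first step (replacing $M_{m,\Delta}$ by the projector $\Pi_{m,\Delta+\frac{\Delta}{2}}$ at the cost of the Gibbs tail $\frac{2}{\pi^{2}(\frac{\Delta}{2}-1)}$, i.e. \cref{Lemma: Sinc Upper bound} with $c=\frac{\Delta}{2}$) is exactly the paper's \cref{eq: Max overlap}, and your closing Haar average with effective dimension $\frac{d}{2(\Delta+\frac{\Delta}{2})}$ is exactly the paper's final application of \cref{Thm: Existential Unforgeability}. The gap is the bridge between them, and it is not the ``delicate bookkeeping'' you defer to your last sentence --- it is the central missing idea. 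Your disjointness claim is arithmetically false: two windows of half-width $\Delta+\frac{\Delta}{2}$ whose centers are only $2\Delta$ apart overlap in a phase interval of width up to $\Delta$ (disjointness of width-$3\Delta$ windows needs separation greater than $3\Delta$, not $2\Delta$). Consequently the hypothesis $|m-k|\geq 2\Delta$ does \emph{not} confine the adversary's queries to windows disjoint from the $m$-window, and the statement you need --- that conditioned on the query database the eigenvectors spanning $\Pi_{m,\Delta+\frac{\Delta}{2}}$ remain Haar-random in the orthogonal complement --- is precisely what is left unproven: a post-measurement state with outcome $k$ at distance $2\Delta$ from $m$ has $\sins_{d}^{2}$ tails reaching inside the target window, and since the adversary chooses (and can adaptively re-feed) the input states, its stored states can carry weight on eigenvectors whose phases lie at distance $\approx\Delta$ from $k$, hence inside the $m$-window. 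Generous window-accounting cannot absorb this, because the leakage lands in the $m$-window itself, not in the window you credited to the query.

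This is exactly the point where the paper's proof does something you never do: it invokes \cref{Authentic Verifiation Probability}. Each stored query state $\ket{\psi_{k'}}$ passes verification at its own outcome $k'$ with probability exceeding $\left(1-\sqrt{1-f(\Delta)}\right)^{2}$; running the same soft-to-projector decomposition, now centered at $k'$, converts this into \cref{eq: Max overlap 2}, an explicit bound forcing $\bra{\psi_{k'}}\Pi_{m,\Delta+\frac{\Delta}{2}}\ket{\psi_{k'}}=\negl(\lambda)$, from which the paper concludes $\mathrm{span}\{Q_{\mathcal{A}}\}$ is (up to negligible error) orthogonal to the range of $\Pi_{m,\Delta+\frac{\Delta}{2}}$ --- and only then does it apply \cref{Thm: Existential Unforgeability}. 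In other words, the quantitative concentration of query states away from the target window is supplied by \cref{Authentic Verifiation Probability}, and your proposal contains neither that invocation nor any substitute tail estimate for the stored states, so the reduction to the ideal-QPUF bound is unjustified as written. (Your other flagged concern --- the random number of eigenphases falling in each width-$3\Delta$ window --- is not a point of divergence: the paper likewise passes from the projector to the effective dimension $\frac{d}{2(\Delta+\frac{\Delta}{2})}$ without spelling that out, so both treatments are equally informal there; the missing orthogonality mechanism is the gap you would have to fill.)
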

The proof of this theorem can be found in \cref{Proof: Theorem 4}.

\section{Discussion and Future Work}
\label{Discussion}

We begin the discussion, by stating that in our QPUF models we do not address \lq man-in-the-middle' attacks, and we explain our rationale for this. We emphasize that such attacks are a distinct issue, independent of the QPUF framework, and should be treated separately, which is why they fall outside the scope of this paper. Specifically, man-in-the-middle scenarios can be viewed as arbitrarily varying channels (AVCs) established between communicating parties. As in other QPUF models \cite{Skori2010, Arapinis2021quantumphysical,SWAP_Mina},communication implicitly assumes that these AVCs have positive capacity. For instance, standard QPUF protocols are also vulnerable to "jamming" attacks, where an adversary might intercept and alter the challenge state or steal the legitimate responses, preventing successful verification or enabling false verification respectively. These issues can be addressed by studying AVCs with positive capacities and integrating them into our QPUF scheme.

Thus, omitting man-in-the-middle attacks does not detract from the validity of our proposed models. Nevertheless, incorporating such attacks would enhance the practicality of our approach, and we plan to explore this in future work.

\subsection{Comaprison between QR-PUF and our proposed models}
The (QR)-PUF model \cite{Skori2010} faces several challenges that are absent in the model presented in \cite{Arapinis2021quantumphysical} and also our proposed models:
\begin{itemize}
    \item The Challenge-Response Table (CRT) consists primarily of classical strings, meaning the potential of quantum states isn't fully utilized.
    \item The PUF's unitary is fully known, allowing an adversary with a quantum computer to simulate it and predict the response states.
    \item There is a trade-off between noise resistance and unclonability in the (QR)-PUF model.
    \item The certifier has access to the CRT, making them a trusted party.
\end{itemize}

On the other hand, a key requirement in our models, as well as in the model from \cite{Arapinis2021quantumphysical}, is the assumption of Haar random unitaries. Since perfect Haar randomness is unattainable, this remains a potential limitation until the security of our schemes is evaluated under weaker assumptions. However, progress has been made with unitary unitary designs \cite{Haferkamp2022randomquantum, boche2019simultaneous, boche2017randomness}, and analyzing the security of our schemes using unitary-designs is an interesting direction for future work. If the adversary is restricted to only query access of the QPUF device, it was shown in \cite{doosti2022connection}, that PRUs can be used as QPUFs.

Another promising avenue is exploring many-body quantum systems that can generate states with behavior close to Haar randomness. Investigating how such systems can be leveraged for our proposed models' security analysis will also be an important part of future research. These areas are left for future work and are beyond the scope of this paper.

\subsection{Comparison between Hybrid PUF \cite{CDMWAK23} and our proposed models}

While developing our model, we noticed that the concept of combining classical elements with quantum constructions was introduced earlier through the Hybrid PUF \cite{CDMWAK23}. A Hybrid PUF consists of two main components: a Classical PUF (CPUF) and a public encoder that maps the CPUF's outputs to quantum states. The key assumption in this model is that the CPUF outputs are hidden from the adversary. Security is based on the difficulty for an adversary to infer the CPUF's outputs from the encoded quantum states. However, this reliance on the secrecy of the CPUF's outputs makes the Hybrid PUF model weaker than ours, which requires no such assumption. 

In the following subsections, we mention some limitations of the implementations of the $\lq$\textit{Ideal QPUF}' and propose possible interesting future research directions.

\subsection{Limitations for RCNOT-QPUF}

This model assumes that, when provided with a Haar random unitary $U$, the QPUF device can carry out its inverse operation $U^{\dagger}$. However, the efficiency of performing such an operation remains uncertain. 

\subsubsection*{Open research direction:}

Attempts have been made to develop algorithms for computing the inverse of an unknown $D$-dimensional unitary \cite{sardharwalla2016universal,Sedl_k_2019,Quintino_2019_Prob,Quintino_2019,Dong_2021,Quintino2022deterministic}. None of the cited algorithms are deterministic and exact at the same time. It is also shown, in the above-mentioned references, that with the increase in precision of the implementation (\textit{deterministic approximate case}) or increase in the probability of correct implementation (\textit{probabilistic exact case, \cite{Sedl_k_2019,Quintino_2019_Prob}}) the gate cost complexity of the algorithms increases exponentially. Thus, although the implementations work, they are inefficient. Recently, an attempt was made for finding $\lq$\emph{deterministic}' and $\lq$\emph{exact}' implementation \cite{Yoshida_2023}. However, the introduced algorithm works only for $SU(2)$ unitaries and it remains an open problem to extend the algorithm to general dimensions and finding a closed form expression for its complexity as a function of dimension. To summarize, it is an open problem to find a computationally efficient method for inverting any unknown blackbox unitary.

\subsection{Limitations for PE-QPUF}

As the model is grounded in Quantum Phase Estimation, it inherits the \emph{exponential overhead} issue associated with it. For any general Unitary matrix $U$, Quantum Phase Estimation necessitates the implementation of the gate $CU$ $d-1$ times (for qubits, and in general scales as $d$), where $d$ is the dimension of the ancilla system. For a $\lambda$ qubit ancilla system, the gate $CU$ needs to be applied $2^{\lambda} - 1$ times. The gate cost complexity for such implementation can be computed \cite{NielsenChuang2010}, by summing a geometric series, to be $2^{\lambda}$. Other schemes, like \emph{fast phase estimation} \cite{svore2013faster}, performs phase estimation with logarithmic circuit depth, but the classical post processing remains exponentially costly.  
Consequently, a \emph{trade-off} between $\lq$\emph{implementability}' and $\lq$\emph{security}' emerges, as efficient implementation requires smaller dimensions, while enhanced security needs larger dimensions.

\subsubsection*{Open research direction:}

Since implementing $CU^{2^{\lambda} - 1}$ is exponentially costly (gate cost complexity) in general, one can look for classes of random unitary matrices (other probability measures than the Haar measure) for which the implementation is secure and polynomially costly. 



Another proposal is to convert the multiple applications of the gate $U$ into a single application. Recall from Schrodinger's equation that any unitary $U$ is nothing but the time evolution operator obtained by exponentiating a physical Hamiltonian, i.e., for any Hamiltonian $H$ we can obtain an unitary $U$ as
\begin{equation}
    U(t) = e^{\frac{i}{\hbar}H t},
\end{equation}
where the operator $U(t)$ evolves a state for a time $t$. $\beta$ applications of $U$ would then result into
\begin{equation*}
    U^{\beta}(t) = e^{\frac{i}{\hbar}\beta H t}.
\end{equation*}
Consider a new Hamiltonian $H' = \beta H$. We can then obtain the operator $U'$ by performing
\begin{equation}
    U'(t) = e^{\frac{i}{\hbar}H' t} = U^{\beta}(t).
\end{equation}
If we could create all the Hamiltonians $H'$, for integer powers $\beta$ of $U$, we can then produce a collection of $\lambda$ many gates
\begin{equation}
    \{U^{i
    }\}_{i = 0}^{\lambda-1}.
\end{equation}

Using these gates we can implement $CU^{2^{\lambda - 1}}$ with gate cost complexity $\lambda$. A possibility of creating $H' = \beta H$ is to absorb $\beta$ into the coupling parameters of $H$ by tuning them with external interactions (thermal, magnetic..).
Thus, there is interesting future work in finding many body systems where such tunability is possible. For example, consider the Heisenberg-XYZ Hamiltonian 
\begin{equation}
    \hat{H}=-\frac{1}{2}\sum_{j=1}^{N}(J_{x_{j}}\sigma_j^x\sigma_{j+1}^x+J_{y_{j}}\sigma_j^y \sigma_{j+1}^y+J_{z_{j}}\sigma_j^z \sigma_{j+1}),
  \end{equation}
where $J_{x_j},J_{y_j},J_{z_j}$ are the couplings. If the couplings were random, we would have a random unitary constructed out of such a Hamiltonian. Now, if tuning the couplings with external interactions was possible, this could be a suitable candidate. It is also important to note that such an implementation is not solving the problem but rather translating $\lq$\emph{time-complexity}' into $\lq$\emph{energy-complexity}' as for tuning the couplings one has to access higher energy states. However, this translation of the problem from \emph{time} to \emph{energy} may be sometimes useful for practical purposes as relatively higher energies might be more accessible than exponentially long waiting times for token generation. 

In the following section we present a comparison between the two models proposed in this paper.

\subsection{Comparision between MB-QPUF and Ideal QPUF}

In this section, we compare the two models presented in this paper. For near-term practical implementations, we argue that the MB-QPUF is more suitable than the Ideal QPUF for the following reasons:
\begin{itemize}
    \item The MB-QPUF can be implemented efficiently with oracle access to a Haar random unitary, whereas in the proposed implementations in this paper, the Ideal QPUF requires an exponential gate cost, making it less feasible.
    \item Storing entangled states is not a concern for the \textit{Ideal QPUF}, as most constructions in the literature \cite{Arapinis2021quantumphysical,SWAP_Mina,Skori2010} already involve storing highly entangled states.
    \item In practical scenarios, the prover typically has limited resources. The MB-QPUF does not require quantum memory on the prover's side, whereas the Ideal QPUF does. Since quantum memory is currently expensive, this is an advantage. However, this benefit is somewhat reduced by the fact that possessing a QPUF device itself can be costly in the near term.
\end{itemize}
That said, if the challenges around quantum memory are resolved and we develop an efficient implementation of the Ideal QPUF, it would offer more advantages than the MB-QPUF. This is because the Ideal QPUF is a complete device that even the verifier cannot tamper with, while in the MB-QPUF, the query process can potentially be compromised.

An efficient implementation of the \textit{ideal QPUF} also addresses the long-standing open problem of public-key quantum money schemes \cite{aaronson2012quantummoneyhiddensubspaces}. Here's a brief overview of how it works:
\begin{itemize}
    \item A public server (or mint) possesses the ideal QPUF device $\Lambda_{U}$.
    \item The mint initialises arbitrary input states and measures them using $\Lambda_{U}$ to generate money bills $U\ket{i}$ and corresponding classical outcomes $i$. These classical outcomes will serve as serial numbers which are printed on the money bill. Then these money bills are distributed.
    In the money generation process there will be other details like what happens when two money bills have the same serial number during production? Two money bills with the same measurement outcome are discarded. Notice, that the mint wants to print polynomially many money bills because it makes no sense if the money generation scheme has exponential complexity. However, there are exponentially many measurement outcomes possible with uniform probability on average. It is therefore trivial to show that the collision probability is negligible.
    \item When Alice transfers a bill to Bob, Bob sends it to the server for verification. If successful, the server returns the post-measurement state to Bob. Notice, that if the verification is passed successfully the measurement does not change the quantum state of the money bill.
\end{itemize}

\section{Methods}

\subsection{Proof of \cref{Thm: Existential Unforgeability}}
\label{Proof: Theorem 2}

\begin{proof}(\cref{Thm: Existential Unforgeability})
We begin the proof by considering an arbitrary measurement outcome $q \notin Q_{\mathcal{A}}$, where $Q_{\mathcal{A}}$ is the query database of the adversary.

Following \cref{eq: Random Variable Definition}, we write the passing probability for the adversary as
\begin{equation}
    P_{q} = \Tr \left[\Pi_{q}^{U} \ketbra{\psi_{\mathcal{A}}}{\psi_{\mathcal{A}}} \right],
\end{equation}
where $\ket{\psi_{\mathcal{A}}}$ is the guess state of the adversary. 

Any unitary $U$ can be written in standard basis as
\begin{equation}
     U = (U \ket{0}, U\ket{1}, \cdots, U \ket{D-1}).
\end{equation}
Let us now consider the following construction for a Haar random unitary \cite{petz2003_HaarConstruction}, which is inspired by the Gram-Schmidt orthogonalisation process \cite{Axler_LinAlg}:
\begin{itemize}
    \item Choose $i^{th}$ column $U\ket{i}$ uniformly at random from $\mathbb{C}^{D}$.
    \item Choose $j^{th}$ column $U\ket{j}$ uniformly at random from the $D-1$ dimensional orthogonal subspace of $U\ket{i}$.
    \item Choose $k^{th}$ column $U\ket{k}$ uniformly at random from the $D-2$ dimensional orthogonal subspace of the subspace spanned by $U\ket{i}$ and $U\ket{j}$.
    \item Keep iterating until all the columns are obtained.
\end{itemize}
By exploiting the left invariance of the Haar measure, it is shown in \cite{petz2003_HaarConstruction} that such a construction gives a Haar random unitary. Recall the structure of the query database. $Q_{\mathcal{A}}$: 
\begin{equation*}
        Q_{\mathcal{A}} \coloneqq \{\ket{\psi_{\mathcal{A}}^{i}} , q_{i} , U\ket{q_{i}}\}_{i \in \mathbb{Z}_{|Q_{\mathcal{A}}|}},
    \end{equation*}
where we let $\{\ket{\psi_{\mathcal{A}}^{i}}\}$ be the set of input states initialised by the adversary to query the QPUF. Since a quantum measurement can collapse two different states onto a same state, there can be degeneracies in $Q_{\mathcal{A}}$, i.e., for two states $\ket{\psi_{\mathcal{A}}^{i}} \neq \ket{\psi_{\mathcal{A}}^{j}}$ we may have,
\begin{equation*}
    (q_{i},U\ket{q_{i}}) = (q_{j},U\ket{q_{j}}).
\end{equation*}
Consider $Q'_{\mathcal{A}}$, with $|Q'_{\mathcal{A}}| = |Q_{\mathcal{A}}|$, having no degeneracies. It is then evident that such a case is the best case scenario for the adversary, as in such case the maximum number of different token states can be owned. Hence, the success probability, with query database $Q'_{\mathcal{A}}$, will always be greater than or equal to the one with $Q_{\mathcal{A}}$. We would now perform a calculation with $Q'_{\mathcal{A}}$ to find an upperbound on the expected success probability. Subsequently, consider $q \notin Q'_{\mathcal{A}}$. Passing the verification for such $q$ is equivalent to guessing the state vector $U\ket{q}$. It follows from the construction of Haar random unitary above \cite{petz2003_HaarConstruction} that the task of guessing $U\ket{q}$ is equivalent to guessing a vector sampled uniformly at random from a $\mathbb{C}^{D - |Q'_{\mathcal{A}}|}$ dimensional subspace. Now, consider a Haar random unitary $W \in U(D - |Q'_{\mathcal{A}}|)$. By the arguments above, we have
\begin{equation}
    U \projector{m} U^{\dagger} \sim W \projector{0} W^{\dagger},
\end{equation}
where the symbol $\sim$ denotes \emph{"distributed as"}. 

Let us construct a random variable $X$ as:
\begin{equation}
    X \coloneqq \Tr(W \ketbra{0}{0} W^{\dagger} \rho_{\mathcal{A}}).
\end{equation}
It follows from our arguments that
\begin{align}
    X &\sim P_{q} \notag \\
    \implies E[X] &= E[P_{q}].
\end{align}
Thus, we have
\begin{align}
    E[X] &= \int_{W} \Tr(W \ketbra{0}{0} W^{\dagger} \ketbra{\psi_{\mathcal{A}}}{\psi_{\mathcal{A}}}) dW \notag \\
    &= \Tr \left(  \frac{\mathbb{I}_{D-|Q'_{\mathcal{A}}|}}{D-|Q'_{\mathcal{A}}|}  \cdot \ketbra{\psi_{\mathcal{A}}}{\psi_{\mathcal{A}}}\right) \notag\\
    &= \frac{1}{D - |Q'_{\mathcal{A}}|} \notag \\
        &= \frac{1}{D - |Q_{\mathcal{A}}|} \notag \\
        &= E[P_{q}].
\end{align}
Since we have calculated this for the best case query scenario, $Q'_{\mathcal{A}}$,  we in general have
\begin{equation}
    E[P_{q}] \leq \frac{1}{D - |Q_{\mathcal{A}}|}.
\end{equation}
This completes the proof.
\end{proof}
\subsection{Proof of \cref{Authentic Verifiation Probability}}
\label{Proof: Theorem 3}
Before we begin with the proof of \cref{Authentic Verifiation Probability}, we will first introduce two lemmas.
\begin{lemma}[Lower bound for sinc series]
\label{Lemma: Sinc lower bound}
For some decision boundary $\Delta$, an arbitrary measurement outcome $k$ and $\forall x$ such that $|x-k| \leq \Delta - b$, we have the following lower bound on the normalized sinc series for some arbitrary $0 \leq b < \Delta$

\begin{equation}
    \sum_{\substack{|\delta| \leq \Delta \\ 
    |x - k| \leq \Delta - b}} \sinc^{2}(x - k - \delta) \geq 1 - \frac{2}{\pi^{2}(b + \frac{1}{2})}.
\end{equation}
\end{lemma}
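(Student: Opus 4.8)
The plan is to substitute $y \coloneqq x - k$, so that the hypothesis reads $|y| \leq \Delta - b$ and the quantity to bound becomes $S \coloneqq \sum_{|\delta| \leq \Delta} \sinc^{2}(y - \delta)$, where $\delta$ ranges over the integers. The key starting point is the partition-of-unity identity
\[
\sum_{\delta \in \mathbb{Z}} \sinc^{2}(y - \delta) = 1, \qquad y \in \mathbb{R},
\]
which I would derive by factoring each term as $\sinc^{2}(y-\delta) = \frac{\sin^{2}(\pi y)}{\pi^{2}(y-\delta)^{2}}$ (using $\sin^{2}(\pi(y-\delta)) = \sin^{2}(\pi y)$ for integer $\delta$) and invoking the cosecant-squared identity $\sum_{\delta\in\mathbb{Z}}(y-\delta)^{-2} = \pi^{2}/\sin^{2}(\pi y)$; the integer case $y \in \mathbb{Z}$ is immediate since then only one term survives. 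Using this, the target sum is rewritten as the complement of its tail, $S = 1 - \sum_{|\delta| > \Delta} \sinc^{2}(y-\delta)$, reducing the lemma to the upper bound $\sum_{|\delta| > \Delta} \sinc^{2}(y-\delta) \leq \frac{2}{\pi^{2}(b + 1/2)}$.

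Next I would estimate the tail termwise. From the same factorization, $\sinc^{2}(y-\delta) \leq \frac{1}{\pi^{2}(y-\delta)^{2}}$ since $\sin^{2}(\pi y) \leq 1$. For the right tail I write $\delta = \Delta + m$ with $m \geq 1$; the bound $y \leq \Delta - b$ gives $|y-\delta| = \delta - y \geq m + b$, and symmetrically $y \geq -(\Delta - b)$ handles the left tail $\delta = -(\Delta+m)$ with $|y-\delta| \geq m + b$. Combining both tails,
\[
\sum_{|\delta| > \Delta} \sinc^{2}(y - \delta) \leq \frac{2}{\pi^{2}} \sum_{m=1}^{\infty} \frac{1}{(m+b)^{2}}.
\]

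It then remains to show $\sum_{m \geq 1}(m+b)^{-2} \leq \frac{1}{b + 1/2}$. Here I would use that $t \mapsto (t+b)^{-2}$ is convex, so the midpoint rule underestimates the integral, i.e. $\frac{1}{(m+b)^{2}} \leq \int_{m-1/2}^{\,m+1/2} \frac{dt}{(t+b)^{2}}$; summing over $m \geq 1$ telescopes to $\int_{1/2}^{\infty} \frac{dt}{(t+b)^{2}} = \frac{1}{b + 1/2}$. Substituting back produces exactly the stated bound.

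The main obstacle is not the algebra but getting the two structural facts right: first, a clean justification of the partition-of-unity identity (the cosecant-squared evaluation is the most economical route, but Poisson summation with the fact that $\widehat{\sinc} = \mathbbm{1}_{[-1/2,1/2]}$ also works); and second, extracting the \emph{sharp} additive constant $b + \tfrac12$ rather than a weaker $b$. This sharpness is precisely what the convex midpoint comparison delivers — a cruder comparison such as $\int_{1}^{\infty}$ would lose the crucial $\tfrac12$, after which the bound would no longer match the form required downstream in \cref{Authentic Verifiation Probability}. Everything else is routine bookkeeping with the geometry of the integer shifts.
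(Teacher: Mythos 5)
Your proposal is correct and follows essentially the same route as the paper's proof: the partition-of-unity identity for $\sinc^{2}$, rewriting the target sum as $1$ minus the tail, bounding the tail termwise using $\sin^{2}(\pi y)\leq 1$ and the geometry $|y-\delta|\geq m+b$, and extracting the sharp constant $b+\tfrac12$ via the midpoint-rule integral comparison (the paper's \cref{fig: Integral bound} is exactly this convexity estimate). The only difference is cosmetic: you justify the partition-of-unity identity (via the cosecant-squared identity), whereas the paper simply asserts it.
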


\begin{proof}
Without loss of generality, we work with the measurement outcome $k=0$. Also note that, $\forall x$, 
\begin{align}
    \sum_{|\delta| \leq \infty} \sinc^{2}(x - \delta) &= 1 \notag \\
   \implies \sum_{|\delta| \leq \Delta} \sinc^{2}(x - \delta)  + \sum_{|\delta| > \Delta} \sinc^{2}(x - \delta) &= 1.
\end{align}
Thus, we have
\begin{align}
     \sum_{|\delta| \leq \Delta} \sinc^{2}(x - \delta) \geq 1 - \max_{x} \sum_{|\delta| > \Delta} \sinc^{2}(x - \delta).
\end{align}
Now, for $|x| \leq \Delta - b$, we proceed to obtain
\begin{align}
     &\max_{|x| \leq \Delta - b}\sum_{|\delta| >   \Delta}  \textrm{sinc}^{2}(\delta - x) \notag \\ 
     &= \max_{|x| \leq \Delta - b} \left( \frac{2 \sin^{2}(\pi(x))}{\pi^{2}} \lim_{N \to \infty}\sum_{\delta = \Delta +1}^{N} \frac{1}{(\delta - x)^{2}}\right) \notag \\
    &[\because sin^{2}(x)\; \textrm{is periodic in $\pi$} ] \notag \\
    &\leq \max_{|x| \leq \Delta - b} \lim_{N \to \infty} \int_{\Delta + 0.5}^{N +0.5} \frac{2 \sin^{2}(\pi(x))}{\pi^{2}(\delta - x -0.5)^{2}} d\delta  \label{eq: integral bound}\\
    &= \max_{|x| \leq \Delta - b} \lim_{N \to \infty} \left( \frac{2 \sin^{2}(\pi(x))}{\pi^{2}(\Delta + 0.5 - x)} - \frac{2 \sin^{2}(\pi(x))}{\pi^{2}(N +0.5 - x)} \right) \notag\\
    &= \max_{|x| \leq \Delta - b} \left( \frac{2 \sin^{2}(\pi(x))}{\pi^{2}(\Delta + 0.5 - x)} \right) \notag \\
    &\leq   \frac{2 }{\pi^{2}(b + \frac{1}{2})}.
\end{align}
In \cref{eq: integral bound} we have bounded the quadratic sum using an integral, which becomes evident from the illustration \cref{fig: Integral bound}
\begin{figure}
   \begin{center}
    \begin{tikzpicture}[yscale=3]
    \draw[->] (0,0) +(-1,0) -- ++(7,0) node[right] {$x$};
    \draw[->] (0,0) +(0,-.5) -- ++(0,1) node[above] {$y$};
    \foreach \i in {1,2,3,4,5}
    \draw (\i,.05) -- ++(0,-.1) node [below] {$\i$};
    \draw[scale=1, domain=1.5:5, smooth, variable=\x, blue] plot ({\x}, {1/\x/\x});
    \draw[scale=1, domain=1.5:5, smooth, variable=\x, dashed, green]
    plot ({\x}, {1/(\x-0.5)/(\x-0.5)});
    \draw[scale=1, domain=1.3:5, smooth, variable=\x, dashed, red]
    plot ({\x}, {1/(\x+0.5)/(\x+0.5)});
    \draw[green, dashed] (5,1 ) -- ++(1,0) ++(0.2,0) node[right] {$1/(n-0.5)^2$};
    \draw[blue ,       ] (5,.8) -- ++(1,0) ++(0.2,0) node[right] {$1/n^2$};
    \draw[red  , dashed] (5,.6) -- ++(1,0) ++(0.2,0) node[right] {$1/(n+0.5)^2$};
    
    \foreach \x in {2,3,4,5}
    \draw (\x-0.5, 0) -- ++(0,{1/\x/\x}) -- ++(1,0) --  ++(0,{-1/\x/\x});
    \end{tikzpicture}
    \end{center}
    \caption{Integral bound for quadratic sum.}
    \label{fig: Integral bound}
\end{figure}
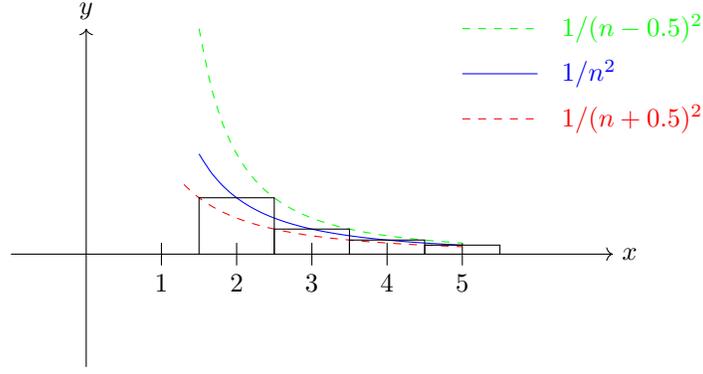
\begin{equation}
    \int_{a}^{b} \frac{dx}{(x + 0.5)^{2}} \leq \sum_{x = a}^{b} \frac{1}{x^{2}} \leq \int_{a}^{b} \frac{dx}{(x - 0.5)^{2}}.
\end{equation}

Therefore, we have
\begin{equation}
    \sum_{\substack{|\delta| \leq \Delta \\ 
    |x-k| \leq \Delta - b}} \sinc^{2}(x - k - \delta) \geq 1 - \frac{2}{\pi^{2}(b + \frac{1}{2})}.
\end{equation}
This completes the proof.
\end{proof}
\begin{lemma}[Upper bound for normalized sinc series]
\label{Lemma: Sinc Upper bound}
For some decision boundary $\Delta$, some arbitrary $c > 1$, an arbitrary measurement outcome $k$ and $\forall x$ such that $|x - k| \geq \Delta+c$, we have the following upper bound on the normalized sinc series

\begin{equation}
    \sum_{\substack{|\delta| \leq \Delta \\ 
    |x - k| \geq \Delta + c}} \sinc^{2}(x - k - \delta) < \frac{2}{\pi^{2}(c - 1)}.
\end{equation}
\end{lemma}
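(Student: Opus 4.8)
The plan is to mirror the structure of the proof of \cref{Lemma: Sinc lower bound}, except that here I bound the \emph{finite} sum from above rather than controlling the infinite tail. First I would reduce to the case $k=0$, since the summand depends only on $x-k$. Then, because $\sinc^2$ is even and the window $|\delta|\le\Delta$ is symmetric about $0$, I would assume without loss of generality that $x\ge\Delta+c$; the case $x\le-(\Delta+c)$ reduces to this one by the substitution $\delta\mapsto-\delta$. Under this assumption every term satisfies $x-\delta\ge x-\Delta\ge c>1>0$, so all the denominators appearing below are strictly positive, and it is exactly the hypothesis $c>1$ that guarantees this.

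The step where this proof departs from \cref{Lemma: Sinc lower bound} is the treatment of the $\sin^2$ factor: since I want a bound uniform in $x$, instead of keeping it I would simply discard it. Using $\sin(\pi(x-\delta))=(-1)^\delta\sin(\pi x)$ for integer $\delta$ together with $\sin^2(\pi x)\le1$ gives
\begin{equation}
\sum_{|\delta|\le\Delta}\sinc^2(x-\delta)=\frac{\sin^2(\pi x)}{\pi^2}\sum_{\delta=-\Delta}^{\Delta}\frac{1}{(x-\delta)^2}\le\frac{1}{\pi^2}\sum_{\delta=-\Delta}^{\Delta}\frac{1}{(x-\delta)^2}.
\end{equation}
I would then bound the remaining finite quadratic sum by an integral exactly as in \cref{fig: Integral bound}, estimating each term by $\tfrac{1}{(x-\delta)^2}\le\int_{\delta-1/2}^{\delta+1/2}\tfrac{d\delta'}{(x-\delta'-1/2)^2}$ and concatenating these into a single integral over $[-\Delta-1/2,\Delta+1/2]$:
\begin{equation}
\sum_{\delta=-\Delta}^{\Delta}\frac{1}{(x-\delta)^2}\le\int_{-\Delta-1/2}^{\Delta+1/2}\frac{d\delta'}{(x-\delta'-1/2)^2}=\frac{1}{x-\Delta-1}-\frac{1}{x+\Delta}.
\end{equation}

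Finally, I would drop the positive subtracted term and use $x-\Delta\ge c$ to obtain $\frac{1}{x-\Delta-1}\le\frac{1}{c-1}$, which yields $\sum_{|\delta|\le\Delta}\sinc^2(x-\delta)<\frac{1}{\pi^2(c-1)}\le\frac{2}{\pi^2(c-1)}$, the claimed inequality (strictness comes from discarding the positive tail term, and the stated factor of $2$ leaves comfortable slack). I expect the main obstacle to be purely bookkeeping rather than any hard estimate: one must set up the integral comparison with the correct half-integer shift so that the concatenated endpoints come out as $x-\Delta-1$ and $x+\Delta$, since this is precisely what makes the bound controlled by $c-1$ instead of $c$, and it is also where the constraint $c>1$ must be invoked to keep $x-\Delta-1\ge c-1>0$ so that every denominator stays positive.
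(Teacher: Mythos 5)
Your proof is correct and takes essentially the same route the paper intends: the paper's own proof of this lemma is simply deferred to the argument of \cref{Lemma: Sinc lower bound}, i.e., factoring out $\sin^2(\pi x)$ by periodicity of $\sin^2$, bounding it by $1$, and controlling the remaining quadratic sum with the half-integer-shifted integral comparison of \cref{fig: Integral bound}, which is exactly what you do. In fact your argument yields the stronger constant $\frac{1}{\pi^{2}(c-1)}$, so the stated bound $\frac{2}{\pi^{2}(c-1)}$ follows with a factor of $2$ to spare.
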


\begin{proof}
The proof of this lemma follows along the similar lines of argument as \cref{Lemma: Sinc lower bound}.
\end{proof}
Now, we proceed to prove \cref{Authentic Verifiation Probability}.
\begin{proof}(\cref{Authentic Verifiation Probability})
For a measurement outcome $m_{n}$ at the $n^{th}$ round, let $\ket{\psi_{m_{n},\Delta}}$ denote the post-measurement state for some decision boundary $\Delta$.

Let $X$ be a random variable, defined as  $X \coloneqq [0,\cdots,d-1]$, denoting the possible measurement outcomes at the $(n-1)^{th}$ round. Let us consider an arbitrary post-measurement state at the $(n-2)^{th}$ round, which can be expressed in the eigenbasis $\{\ket{\phi_{i}}\}_{i = 0}^{D-1}$ of the Haar random unitary $U$ as
\begin{equation}
    \ket{\psi_{m_{n-2},\Delta}} \equiv \ket{\psi} = \sum_{i \in \mathbb{Z}_{D}} c_{i} \ket{\phi_{i}},
\end{equation}
where $c_{i}$'s are complex numbers such that $\sum_{i \in \mathbb{Z}_{D}} |c_{i}|^{2} = 1$. The p.d.f. for $X$ is then obtained as,
\begin{equation}
    \Pr(X = m_{n-1}) = \bra{\psi} M_{m_{n-1}} \ket{\psi}.
\end{equation}
We then define another random variable $P_{X,\Delta}$  that denotes the conditional probability of having an $n^{th}$ successful verification, conditioned on $n-1$ prior successful verifications. The values for the random variable $P_{X,\Delta}$, $\forall m_{n-1} \in \mathbb{Z}_{d}$ is obtained as
\begin{equation}
     P_{m_{n-1},\Delta} = \Tr [M_{m_{n-1},\Delta} \frac{U_{m_{n-1}}\ketbra{\psi}{\psi}U_{m_{n-1}}^{\dagger}}{\bra{\psi} M_{m_{n-1}}\ket{\psi}}].
\end{equation}
The expected conditional probability will be:
\begin{align}
    &\E [P_{X,\Delta}]  
    = \sum_{k \in \mathbb{Z}_{d}} P_{k,\Delta} \cdot \Pr(X = k) \notag \\
    &= \sum_{k\in \mathbb{Z}_{d}} \frac{\Tr[M_{k,\Delta} U_{k} \ketbra{\psi}{\psi}U_{k}^{\dagger}]}{\bra{\psi}M_{k}\ket{\psi}} \cdot \bra{\psi}M_{k}\ket{\psi} \notag \\
    &= \sum_{k \in \mathbb{Z}_{d}} \bra{\psi}M_{k,\Delta}M_{k} \ket{\psi} \notag \\
    &= \sum_{i \in \mathbb{Z}_{D}} |c_{i}|^{2}\sum_{k \in \mathbb{Z}_{d}} \sins_{d}^{2}(\phi_{i}-k)\sum_{|\delta|\leq \Delta} \sins_{d}^{2}(\phi_{i} -k - \delta) \notag \\
    &\geq \min_{\phi_{i},\Delta} \left(\sum_{k \in \mathbb{Z}_{d}} \sinc^{2}(\phi_{i}-k)\sum_{|\delta|\leq \Delta} \sinc^{2}(\phi_{i} -k - \delta)\right) \cdot 1. \label{eq: pre-bound consistency}
\end{align}
The value $\phi_{i^*}$ that minimises \cref{eq: pre-bound consistency} will be of the form $\phi_{i^*} = \lfloor \phi_{i^*} \rfloor + x$, where $0\leq x \leq 1$ and $\lfloor \phi_{i^*} \rfloor$ is some integer in $\mathbb{Z}_{d}$. Thus, there will always be a measurement outcome $k \in \mathbb{Z}_{d}$, such that $k = \lfloor \phi_{i^*}\rfloor$. We now choose some $0 \leq b < \Delta$ and truncate the sum in \cref{eq: pre-bound consistency} in the following manner to obtain a lower bound as
\begin{align}
      \E [P_{X,\Delta}]&\geq  \sum_{|\delta'|\leq \Delta-b} \sinc^{2}(\phi_{i^*} - \lfloor \phi_{i^*}\rfloor - \delta') \notag \\
      &\times\sum_{|\delta|\leq \Delta}\sinc^{2}(\phi_{i^*} - \lfloor \phi_{i^*}\rfloor - \delta' - \delta) \notag   \\
     &= \sum_{|\delta'|\leq \Delta-b} \sinc^{2}(x - \delta')\sum_{|\delta|\leq \Delta}\sinc^{2}(x - \delta' - \delta). \label{eq: Pre-bound semi-final} 
\end{align}
Let 
\begin{subequations}
\begin{align}
   S_{\Delta,b}(x) &\equiv \sum_{|\delta'|\leq \Delta-b} \sinc^{2}(x - \delta') \\
   S_{\Delta,b}(x, \delta')&\equiv \sum_{|\delta|\leq \Delta}\sinc^{2}(x - \delta' - \delta).
\end{align}
\end{subequations}
Thus \cref{eq: Pre-bound semi-final} becomes
\begin{align}
    \E[P_{X,\Delta}] &\geq \sum_{|\delta'|\leq \Delta-b} \sinc^{2}(x - \delta') S_{\Delta,b}(x, \delta').
\end{align}
From \cref{Lemma: Sinc lower bound} we have $\forall x,\delta'$
\begin{equation}
    S_{\Delta,b}(x,\delta') \geq 1 - \frac{2}{\pi^{2}\left(b + x + \frac{1}{2}\right)}.
\end{equation}
We then have
\begin{align}
    \E[P_{X,\Delta}] &\geq \left(1 - \frac{2}{\pi^{2}\left(b + x + \frac{1}{2}\right)}\right)\cdot S_{\Delta,b}(x) \notag \\
    &\geq \min_{x} \left(1 - \frac{2}{\pi^{2}\left(b + x + \frac{1}{2}\right)}\right) \cdot \left(\min_{x} S_{\Delta,b}(x) \right) \notag \\
    &= \left(1 - \frac{2}{\pi^{2}\left(b + \frac{1}{2}\right)}\right) \cdot \left(\min_{x} S_{\Delta,b}(x) \right).
\end{align}
Again, from \cref{Lemma: Sinc lower bound} we have $\forall x$
\begin{equation}
     S_{\Delta,b}(x) \geq 1 - \frac{2}{\pi^{2}(|\Delta - \frac{1}{2}|)}.
\end{equation}
Therefore we get
\begin{align}
    \E[{P_{X,\Delta}}] &\geq \left(1 - \frac{2}{\pi^{2}\left(b + \frac{1}{2}\right)}\right) \cdot \left(1 - \frac{2}{\pi^{2}(|\Delta - \frac{1}{2}|)}\right). \label{eq: Semi-Final latest}
\end{align}
Any value of $0\leq b < \Delta$ in \cref{eq: Semi-Final latest} will give us a valid lower bound. For $\Delta \geq 2$ we can then choose $b = \sqrt{\Delta}$ to obtain the final bound as
\begin{equation}
    \E[{P_{X,\Delta}}] \geq f(\Delta), \label{eq: Final Bound}
\end{equation}
where 

\begin{equation}
    f(\Delta) \equiv \left(1 - \frac{2}{\pi^{2}\left(\sqrt{\Delta} + \frac{1}{2}\right)}\right) \cdot \left(1 - \frac{2}{\pi^{2}(|\Delta - \frac{1}{2}|)}\right). \label{eq: f(Delta)}
\end{equation}
Let us now define another random variable $Q_{X,\Delta}$ as
\begin{equation}
    Q_{X,\Delta} \equiv 1 - P_{X,\Delta}.
\end{equation}
Therefore,
\begin{align}
    \E \left[Q_{X,\Delta}\right] = 1 - \E \left[P_{X,\Delta}\right] \notag \\
    \leq 1 - f(\Delta).
\end{align}
By Markov's inequality \cite{PLESNIAK1990106}, for some $a \geq 0$
\begin{align}
    \Pr[Q_{X,\Delta} \geq a] &\leq \frac{\E[Q_{X,\Delta}]}{a} \notag \\
    &\leq \frac{1 - f(\Delta)}{a}.
\end{align}
Therefore,
\begin{align}
    \Pr[Q_{X,\Delta} < a] &\geq 1 - \frac{1 - f(\Delta)}{a} \notag \\
    \implies \Pr[P_{X,\Delta} > 1- a] & \geq 1 - \frac{1 - f(\Delta)}{a}.
\end{align}
Choosing, $a = \sqrt{1 - f(\Delta)}$ we have,
\begin{align}
    \Pr[P_{X,\Delta} > 1-  \sqrt{1 - f(\Delta)}]  &\geq 1 - \sqrt{1 - f(\Delta)} \notag \\
    \Pr[P_{X,\Delta} > g(\Delta)] & \geq g(\Delta),
\end{align}
where 
\begin{equation}
    g(\Delta) \equiv 1 - \sqrt{1 - f(\Delta)}.
\end{equation}
Let $V_{m_{n-1}}^{n}$ denote the event of successful $n^{th}$ verification for some arbitrary measurement outcome $m_{n-1}$ at the $(n-1)^{th}$ round.
Therefore we have
\begin{align}
    &\Pr[V_{m_{n-1}}^{n}] \notag = P_{m_{n-1},\Delta} \notag \\
    &= \Pr[P_{m_{n-1},\Delta} > g(\Delta)]\cdot\Pr[V_{m_{n-1}}^{n}|P_{m_{n-1},\Delta} > g(\Delta)] \notag \\
    &+ \Pr[P_{m_{n-1},\Delta} \leq g(\Delta)]\cdot\Pr[V_{m_{n-1}}^{n}|P_{m_{n-1},\Delta} \leq g(\Delta)] \notag \\
    &> \Pr[P_{m_{n-1},\Delta} > g(\Delta)]\cdot\Pr[V_{m_{n-1}}^{n}|P_{m_{n-1},\Delta} > g(\Delta)] \notag \\ 
    &= \left(g(\Delta)\right)^{2} \notag \\
    &= \left(1 - \sqrt{1-f(\Delta)}\right)^{2}.
\end{align}
This completes the proof.
\end{proof}
\subsection{Proof of \cref{Theorem: PE-QPUF Existential Unforgeability}}
\label{Proof: Theorem 4}

\begin{proof}(\cref{Theorem: PE-QPUF Existential Unforgeability})
Given an arbitrary adversary guess state 
\begin{equation}
   \ket{\psi_{\mathcal{A}}} = \sum_{i} c_{i} \ket{\phi_{i}},
\end{equation}
we have, for any $m \notin Q_{\mathcal{A}}$,
\begin{align}
    &P_{m,\Delta} \notag \\ &= \bra{\psi_{\mathcal{A}}}M_{m,\Delta}\ket{\psi_{\mathcal{A}}} \notag \\
    &= \sum_{i \in \mathbb{Z}_{D}} |c_{i}|^{2} \sum_{|\delta| \leq \Delta} \sins_{d}^{2}(\phi_{i} - m - \delta) \notag \\
    &\leq \bra{\psi_{\mathcal{A}}}\Pi_{m,\Delta + c}\ket{\psi_{\mathcal{A}}} \notag \\ 
    &+ \max_{|\phi_{i}|> 2\Delta} \sum_{|\delta| \leq \Delta} \sins_{d}^{2}(\phi_{i} - m - \delta)\cdot  \bra{\psi_{\mathcal{A}}}\left(\mathbb{I} - \Pi_{m,\Delta + c}\right)\ket{\psi_{\mathcal{A}}}  \notag \\
    &= \left(1 - \frac{2}{\pi^{2}(c - 1)}  \right)\cdot \bra{\psi_{\mathcal{A}}}\Pi_{m,\Delta + c}\ket{\psi_{\mathcal{A}}} + \frac{2}{\pi^{2}(c - 1)} \label{eq: Max overlap},
\end{align}
where in the last step we have applied \cref{Lemma: Sinc Upper bound} with $c >1$.

We have $\forall k \in Q_{\mathcal{A}}$; $|m - k| \geq 2\Delta$. Consider any $(k',\ket{\psi_{k'}}) \in Q_{\mathcal{A}}$. Combining \cref{eq: Max overlap} with \cref{Authentic Verifiation Probability} we have
\begin{align}
    &\bra{\psi_{k'}}\Pi_{m,\Delta + c}\ket{\psi_{k'}} \notag \\ 
    &\leq 1 - \frac{\left(1 - \sqrt{1 - f(\Delta,1)}\right)^{2} -\frac{2}{\pi^{2}(c-1)} }{1 - \frac{2}{\pi^{2}(c-1)}} \label{eq: Max overlap 2}
\end{align}
Let us choose $c = \frac{\Delta}{2} = 2^{\lambda - 1}$. Thus, \cref{eq: Max overlap 2} becomes
\begin{align}
    \bra{\psi_{k'}}\Pi_{m,\Delta + \frac{\Delta}{2}}\ket{\psi_{k'}} &\leq 1 - (1 - \negl(\lambda)) \\
    &= \negl(\lambda) \notag.
\end{align}
This implies, \begin{align}
    \textrm{span} \{Q_{\mathcal{A}}\} \perp \textrm{span} \{\Pi_{m,\Delta + \frac{\Delta}{2}}\}.
\end{align}
Therefore, we apply \cref{Thm: Existential Unforgeability} with the above mentioned choice of $c$ to obtain 

\begin{align}
    &\E[P_{m,\Delta}] \notag \\
    &\leq \left(1 - \frac{2}{\pi^{2}(\frac{\Delta}{2} - 1)}  \right)\cdot \E [\bra{\psi_{\mathcal{A}}}\Pi_{m,\Delta + \frac{\Delta}{2}}\ket{\psi_{\mathcal{A}}}] + \frac{2}{\pi^{2}(\frac{\Delta}{2} - 1)} \notag \\
    &\leq \left(1 - \frac{2}{\pi^{2}(\frac{\Delta}{2} - 1)}\right)\cdot \left(\frac{1}{\frac{d}{2(\Delta+\frac{\Delta}{2})} - |Q_{\mathcal{A}}|}\right) + \frac{2}{\pi^{2}(\frac{\Delta}{2} - 1)} \notag \\
    &= \negl(\lambda),
\end{align}
where in the second last step we have applied \cref{Thm: Existential Unforgeability}.

This completes the proof.
\end{proof}

\section*{Code Availability}

The codes for the simulations can be found in the following github repository: \url{https://github.com/terrordayvg/Puf_sim}.

\section*{Acknowledgments}\small 
The authors gratefully acknowledge support from the German Federal Ministry of Education and Research (BMBF) through various national initiatives: 

\begin{itemize}
    \item \textbf{6G Communication Systems} via the research hub \textit{6G-life} (Grants 16KISK002 and 16KISK263).
    \item \textbf{Post Shannon Communication (NewCom)} (Grants 16KIS1003K and 16KIS1005).
    \item \textbf{QuaPhySI – Quantum Physical Layer Service Integration} (Grants 16KISQ1598K and 16KIS2234).
    \item \textbf{QTOK – Quantum Tokens for Secure Authentication in Theory and Practice} (Grants 16KISQ038 and 16KISQ039).
    \item \textbf{QUIET – Quantum Internet of Things} (Grants 16KISQ093 and 16KISQ0170).
    \item \textbf{Q.TREX – Resilience for the Quantum Internet} (Grants 16KISR026 and 16KISR038).
    \item \textbf{QDCamnetz – Quantum Wireless Campus Network} (Grants 16KISQ077 and 16KISQ169).
    \item \textbf{QR.X – Quantum link extension} (Grants 16KISQ020 and 16KISQ028).
\end{itemize}

Additionally, the authors acknowledge support from the \textit{6GQT} initiative, funded by the Bavarian State Ministry of Economic Affairs, Regional Development, and Energy.

H.B. also received funding from the German Research Foundation (DFG) within Germany’s Excellence Strategy (EXC-2092 – 390781972), while C.D. received further funding under Grant DE1915/2-1.



\bibliography{bibliography}
\end{document}